\def\br#1\er{\textcolor{red}{#1}} %
\newcommand{\be}{\begin{equation}}
\newcommand{\ee}{\end{equation}}
\newcommand{\ben}{\begin{enumerate}}
\newcommand{\een}{\end{enumerate}}
\newcommand{\bit}{\begin{itemize}}
\newcommand{\eit}{\end{itemize}}
\newcommand{\edoc}{\end{document}}
\newcommand{\qcd}{\begin{flushright} $\Box$ \end{flushright}}
\author{}
\date{}
\title{Omniscient foliations and the geometry of cosmological spacetimes}
\author[I.P. Costa e Silva]{Ivan P. Costa e Silva}
 \address{Department of Mathematics,
	Universidade Federal de Santa Catarina,
	\hfill\break\indent 88.040-900 Florian\'{o}polis-SC, Brazil.}
\email{pontual.ivan@gmail.com
}
 \author[J.L. Flores]{Jos\'e L. Flores}
\address{Departamento de \'Algebra, Geometr\'{\i}a y Topolog\'{\i}a,  Universidad de M\'alaga
	\hfill\break\indent
	Facultad de Ciencias, Campus de Teatinos,
	\hfill\break\indent 29080 M\'alaga, Spain}
\email{floresj@uma.es}
 \author[J. Herrera]{J\'onatan Herrera}
\address{Departamento de Matem\'aticas, Universidad de C\'ordoba
	\hfill\break\indent
	Edificio Albert Einstein, Campus de Rabanales,
	\hfill\break\indent 14071 C\'ordoba, Spain}
\email{jherrera@uco.es}
\begin{document}
\newtheorem{thm}{Theorem}[section]
\newtheorem{teo}[thm]{Theorem}
\newtheorem{prop}[thm]{Proposition}
\newtheorem{lemma}[thm]{Lemma}
\newtheorem{cor}[thm]{Corollary}
\newtheorem{conv}[thm]{Convention}
\newtheorem{defi}[thm]{Definition}
\newtheorem{notation}[thm]{Notation}
\newtheorem{exe}[thm]{Example}
\newtheorem{conj}[thm]{Conjecture}
\newtheorem{prob}[thm]{Problem}
\newtheorem{rem}[thm]{Remark}
\maketitle
\usetikzlibrary{matrix}
\date{}

\begin{abstract}
We identify certain general geometric conditions on a foliation of a spacetime $(M,g)$ by timelike curves that will impede the existence of null geodesic lines, especially if $(M,g)$ possesses a compact Cauchy hypersurface. The absence of such lines, in turn, yields well-known restrictions on the geometry of cosmological spacetimes, in the context of Bartnik's splitting conjecture. Since the (non)existence of null lines is actually a conformally invariant property, such conditions only need to apply for some suitable conformal rescaling of $g$. 
\end{abstract}
\section{Introduction}


Let $(M,g)$ be an $(n+1)$-dimensional ($n\geq 1$) spacetime, and let $\mathcal{F}$ be a smooth foliation of $M$ whose leaves are timelike curves. The problem we propose to address in this note is: are there natural geometric conditions on $\mathcal{F}$ which entail that there are no {\it null (geodesic) lines} on $(M,g)$? 

Recall that a {\em causal geodesic line} is an inextendible causal geodesic maximizing the Lorentzian distance between any two of its points. A causal geodesic line can be either timelike or null, and the latter case occurs if and only if its image is achronal in $(M,g)$. The celebrated Lorentzian splitting theorem (see \cite[Ch. 14]{beembook} and references therein for a detailed account) states that if $(M,g)$ $(i)$ is either globally hyperbolic or timelike geodesically complete, $(ii)$ obeys the {\em timelike convergence condition} (TCC) - $Ric(v,v)\geq 0, \forall v\in TM$ timelike - and $(iii)$ contains a complete {\it timelike} line, then $(M,g)$ is isometric to a product spacetime $(\mathbb{R}\times \Sigma, -dt^2 \oplus h_0)$, where $(\Sigma, h_0)$ is a complete Riemannian manifold. (We refer to the latter situation in abbreviated form by saying that $(M,g)$ {\it splits}. Note that, in this case, $(M,g)$ ends up being both globally hyperbolic {\it and} causally geodesically complete \cite[Theorem 3.67]{beembook}.) 


A natural arena to apply the Lorentzian splitting theorem is the geometry of cosmological spacetimes. Recall that the spacetime $(M,g)$ is said to be {\it cosmological} if it is globally hyperbolic with compact Cauchy hypersurfaces and the $TCC$ holds. A famous open conjecture due to Bartnik \cite{bartnikoriginal} states that cosmological spacetimes should either be timelike geodesically incomplete or else split as above. 

There have been many different attempts to either prove or find a counterexample to Bartnik's conjecture, and partial results have been obtained under additional requirements \cite{bartnikoriginal,bartnikus,E,G1,G2,GL,GV1,GV2}. Now, a standard argument using the compactness of the Cauchy hypersurfaces and the Limit Curve Lemma \cite[Cap. 8]{beembook} establishes that every cosmological spacetime contains a causal geodesic line. Bartnik's conjecture would thus be an immediate corollary to the Lorentzian splitting theorem if one could prove that timelike complete cosmological spacetimes have no null geodesic lines, and this is our main motivation here. This issue is subtle because there are examples \cite{EhG} of globally hyperbolic spacetimes with compact Cauchy hypersurfaces which have {\it only} null lines. However, these do not obey the TCC. Thus, if the conjecture is true, then the TCC is a crucial controlling piece. Indeed, that Bartnik's conjecture is false without the TCC is immediately realized by considering de Sitter's spacetime.  

Of course, because a spacetime of the form $(\mathbb{R}\times \Sigma, -dt^2 \oplus h_0)$ with compact $\Sigma$ does not have any null lines, any geometric condition in cosmological spacetimes that implies Bartnik's splitting a fortiori also implies there are no null lines. But geometric conditions in cosmological spacetimes which either directly rule out null lines or establish the existence of a timelike line are known \cite{EhG,G1,bartnikus}. We are especially interested here in the {\it no observer-horizon} (NOH) condition \cite{G1}, namely, that for {\it every} inextendible timelike curve $\gamma$ in a spacetime $(M,g)$ we have 
\begin{equation}\label{noh1}
    I^{\pm}(\gamma)=M.
\end{equation}
It is easy to verify that the NOH condition in a strongly causal spacetime $(M,g)$ implies that the latter contains no null lines \cite{bartnikus}. Our interest in this particular condition is that it has been established that the existence of a {\it complete} timelike conformal Killing vector field (that is, $X\in \mathfrak{X}(M)$ such that $\mathcal{L}_Xg = \sigma \cdot g$, where $\sigma \in C^{\infty}(M)$ and $\mathcal{L}$ denotes the Lie derivative) in a cosmological spacetime does imply the NOH \cite{bartnikus}. We were thus led to wonder if this might hold for other kinds of timelike vector fields.  

In Ref. \cite{harrisomni}, Harris and Low considered the setting we started out this Introduction with: they called a smooth foliation $\mathcal{F}$ of $M$ by timelike curves {\it future omniscient} [resp. {\it past omniscient}] if for each leaf $\gamma$ of $\mathcal{F}$ we have $I^{-}(\gamma) = M$ [resp. $I^{+}(\gamma) =M$]. The foliation $\mathcal{F}$ then is said to be {\it omniscient} if it is both future and past omniscient; in other words (\ref{noh1}) applies to each leaf. (In physical terms, if we think of the leaves of $\mathcal{F}$ as describing the worldlines of a family of observers, then (\ref{noh1}) means that the whole universe can both receive and send signals to the observer $\gamma$.) The existence of an omniscient timelike foliation $\mathcal{F}$ is a weaker requirement than the NOH condition, because the latter implies that {\it any} timelike foliation is omniscient. The bridge with the approach in \cite{bartnikus} lies in that Harris and Low prove \cite[Corollary 3.2]{harrisomni} that if $(M,g)$ is chronological, then the integral curves of a complete timelike conformal Killing vector field define an omniscient timelike foliation. As we saw above, in a cosmological spacetime this would entail the nonexistence of null lines we seek. 

This work focuses on obtaining various conditions that ensure that a given timelike foliation is omniscient. As matter of fact, it turns out to be enough that it is either future or past omniscient for our purposes here: since in cosmological spacetimes the existence of a future/past omniscient foliation is actually equivalent to the future/past NOH condition [cf. Proposition \ref{omnitonoh} below], we thereby show there are no null lines. Observe, finally, that past/future omniscience, NOH condition or the (non)existence of null lines are {\it conformally invariant} properties; thus, when establishing such properties, one has the freedom to perform conformal rescalings of $g$ whenever convenient.

The rest of the paper is organized as follows. In section \ref{sec2} we study the NOH condition under the presence of a very general timelike foliation by applying notions from Finslerian Geometry. In section \ref{sec2.2} we particularize previous study to the case where the splitting provided by the timelike foliation is orthogonal. We will show that this allows a tighter study of the omniscience condition in the context of Bartnik conjecture. In section \ref{sec2.3} we apply and illustrate the concepts in previous sections to the important case of a gradient timelike vector field. Finally, in section \ref{sec2.4}, we show the versatility of our approach by showing its applicability to a general situation where a timelike foliation is not explicitly given.

\section{Finslerian geometry of timelike foliations and the NOH condition}\label{sec2}

Henceforth we fix a spacetime $(M,g)$ of dimension $n+1\geq 2$. All functions and (sub)ma\-ni\-folds are assumed to be smooth ($C^{\infty}$) unless otherwise stated, although we sometimes repeat that for emphasis. The reader is assumed to have familiarity with the basic tenets of Lorentzian geometry and causal theory of spacetimes as laid out in the core references \cite{beembook,oneill}. 

Let $\mathcal{F}$ be a smooth foliation of $M$ by timelike curves. Since $(M,g)$ is time-oriented, there always exists a future-directed timelike vector field $X\in \mathfrak{X}(M)$ whose integral curves coincide with the leaves of $\mathcal{F}$, in which case we say that $X$ {\it spans} $\mathcal{F}$. If $(M,g)$ is chronological, then $X$ can always be chosen to be complete \cite[Lemma 1.1]{harrisomni}. 

For the rest of this section, therefore, we fix a {\it complete} future-directed timelike vector field $X\in \mathfrak{X}(M)$. As we mentioned above, any timelike foliation can be thought of as being spanned by one such provided $(M,g)$ is chronological. 

We first introduce some notation. Let $\phi:\mathbb{R}\times M\rightarrow M$ be the flow of $X$, which defines a smooth $\mathbb{R}$-action on $M$. We shall need a {\it slice} of this action, namely a smooth hypersurface $\Sigma \subset M$ which is intersected exactly once by each integral curve of $X$, i.e., by each orbit of the action $\phi$. It turns out that such a slice exists if and only if $\phi$ is a {\it proper} action \cite[Proposition 2.11]{JLyo}. It is then important to know when $\phi$ is proper. It is evident that if $(M,g)$ is globally hyperbolic, then any smooth Cauchy hypersurface is a slice, irrespective of $X$. More generally, Harris and Low \cite{harrisomni} introduced the notion of an {\it ancestral pair} for the action $\phi$: a pair of points $p,q\in M$ such that the orbit of $p$ by $\phi$ is contained in $I^{-}(q)$. Thus, if $(M,g)$ is chronological and $\phi$ has no ancestral pairs, then $\phi$ is proper \cite[Theorem 1.2]{harrisomni} (see also \cite[Theorem 3.8]{JLyo}). 

The next result highlights an important relationship between omniscience and the NOH condition. 
\begin{prop}\label{omnitonoh}
Suppose $(M,g)$ is a strongly causal spacetime, and let $X \in \mathfrak{X}(M)$ be a complete timelike vector field whose flow admits a compact slice $\Sigma$. Assume also that $X$ spans a future (resp. past) omniscient timelike foliation $\mathcal{F}$. Then, the future (resp. past) NOH condition holds; hence, $(M,g)$ has no null lines. 
\end{prop}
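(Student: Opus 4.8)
The plan is to argue by contradiction: suppose the foliation $\mathcal{F}$ is future omniscient but the future NOH condition fails, so there is an inextendible timelike curve $\gamma$ with $I^{+}(\gamma)\neq M$. Pick a point $p\in M\setminus I^{+}(\gamma)$. Since $X$ is complete with flow $\phi$, the orbit $O_p = \{\phi(t,p): t\in\mathbb{R}\}$ is precisely a leaf $\gamma_p$ of $\mathcal{F}$, and by future omniscience $I^{-}(\gamma_p)=M$. The idea is to play the two conditions against each other: future omniscience of $\gamma_p$ forces signals from anywhere (in particular from points on $\gamma$) to reach $\gamma_p$, while $p\notin I^{+}(\gamma)$ says no point of $\gamma$ chronologically precedes $p$ itself. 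The compact slice $\Sigma$ is what lets us convert ``reaching the leaf $\gamma_p$ somewhere'' into control near $p$.

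Here are the steps I would carry out. First I would reduce to working with a \emph{future-directed} portion of $\gamma$ whose past still omits $p$ but which is ``long'': using inextendibility of $\gamma$ together with strong causality, one extracts a future-inextendible ray of $\gamma$. Second, I would choose a sequence $\{q_k\}$ along the future end of $\gamma$ leaving every compact set (this uses inextendibility and strong causality — in a strongly causal spacetime an inextendible causal curve is not imprisoned in any compact set). Third, by future omniscience applied to the leaf $\gamma_p$, each $q_k$ lies in $I^{-}(\gamma_p)$, so there is a parameter $t_k\in\mathbb{R}$ with $q_k\ll \phi(t_k,p)$. Fourth — the crucial normalization — I would flow everything back to the slice: since $\Sigma$ is a compact slice, each orbit meets $\Sigma$ exactly once, and I replace $p$ by its unique representative on $\Sigma$; then I would track where the relevant points sit relative to $\Sigma$. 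The aim is to produce, after passing to a subsequence using compactness of $\Sigma$ and the Limit Curve Lemma, a limiting past-directed causal curve from a point of $\Sigma$ (a limit of $\phi(t_k,p)$, suitably re-parametrized to land on $\Sigma$) back toward a limit of the escaping points $q_k$ — which, because the $q_k$ escape compacta while staying causally below points controlled by the compact $\Sigma$, yields either a contradiction with strong causality (an almost-closed or imprisoned causal curve) or an achronality/chronology violation showing $p\in I^{+}(\gamma)$ after all.

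Having established the future NOH condition, the final clause — that $(M,g)$ has no null lines — is immediate from the remark quoted in the Introduction: in a strongly causal spacetime the NOH condition rules out null lines, since a null line would be achronal yet its past would have to be all of $M$, forcing points of the line into its own chronological past. The past-omniscient case is identical after time-reversal.

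I expect the main obstacle to be Step four, the ``return to the slice'' argument: the parameters $t_k$ are a priori unbounded, so the points $\phi(t_k,p)$ need not converge, and one must use the compactness of $\Sigma$ (not of $M$) to extract a convergent subsequence of the \emph{slice representatives} of the relevant orbits and then verify that the limiting causal relations survive — essentially an application of the Limit Curve Lemma combined with lower semicontinuity of the chronological relation, plus strong causality to prevent the limit curve from degenerating or being imprisoned. Making this limiting step rigorous, and in particular ensuring the limit curve genuinely connects a point over $\Sigma$ to the limit of the escaping $q_k$ rather than collapsing, is where the care is needed.
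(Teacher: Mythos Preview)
There is a genuine gap in your approach, and it stems from applying future omniscience to the wrong leaf.

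You use omniscience only of the single leaf $\gamma_p$ through the exceptional point $p$, obtaining $q_k\ll\phi(t_k,p)$ for some $t_k$. But this relation carries no useful information: the $t_k$ may run off to $+\infty$, and the points $\phi(t_k,p)$ all lie on one and the same orbit, so their ``slice representative'' is the single point where that orbit meets $\Sigma$ --- there is no nontrivial subsequence to extract. Likewise the $q_k$ leave every compact set, so there is no limit point for them either, and the Limit Curve Lemma has nothing to bite on. Your Step four cannot be completed as stated. (There is also a direction slip: since future omniscience means $I^{-}(\text{leaf})=M$, the matching ``future NOH'' is $I^{-}(\gamma)=M$, not $I^{+}(\gamma)=M$; but the argument fails in either direction for the same reason.)

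The paper's argument is direct rather than by contradiction, avoids the Limit Curve Lemma entirely, and --- this is the idea you are missing --- applies omniscience to the leaf through \emph{every} $x'\in\Sigma$, not through $p$. Concretely: identify $M$ with $\mathbb{R}\times\Sigma$ via the flow. Given $p=(t,x)$, omniscience of the leaf $\{(\cdot,x')\}$ gives $p\ll(t_{x'},x')$ for some $t_{x'}$, and openness of $I^+$ upgrades this to $[t_{x'},\infty)\times U_{x'}\subset I^+(p)$ for a neighbourhood $U_{x'}$ of $x'$. Now write $\gamma(s_k)=(t_k,x_k)$; by compactness of $\Sigma$ pass to $x_k\to x_\ast$, and by strong causality (non-imprisonment) the $t_k$ are unbounded. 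Eventually $(t_k,x_k)\in[t_{x_\ast},\infty)\times U_{x_\ast}\subset I^+(p)$, which is exactly $p\in I^{-}(\gamma)$. The compactness of $\Sigma$ is used for the convergence $x_k\to x_\ast$, not for any limit-curve construction.
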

\begin{proof}
Let us focus on the future direction, as the past case will be completely analogous. Consider $X$ a complete future-directed timelike vector field spanning the foliation $\mathcal{F}$; and let $\Sigma \subset M$ be a compact slice. Previous foliation allows us to define the difeomorphism $F:\mathbb{R}\times\Sigma\rightarrow M$, $F(t,x):=\gamma_x(t)$, where $\gamma_x$ is the integral curve of $X$ with $\gamma_x(0)=x$. Endow $\mathbb{R}\times\Sigma$ with $F^{*}g$ to make $F$ a Lorentzian isometry. 

Take a point $p\in M$ and consider $F^{-1}(p)=(t,x)\in \mathbb{R}\times\Sigma$. From the future omniscience condition for $\mathbb{R}\times\Sigma$ and the open character of the chronology, there exist for each $x'\in \Sigma$ some value $t_{x'}\in \mathbb{R}$ and some neighborhood $U_{x'}\ni x'$ such that $[t_{x'},\infty)\times U_{x'}\subset I^{+}(t,x)$.  

From construction and the strong causality, any future-inextendible timelike curve $\gamma$ in $(M,g)$ admits an exhaustive sequence ${s_k}$ of values of its domain of definition such that $\{\gamma(s_k)\}\subset M$ is not contained in any compact set. So, the same happens for $F^{-1}(\gamma(s_k))=(t_k,x_k)\subset \mathbb{R}\times\Sigma$. Moreover, up to a subsequence, $x_k\rightarrow x_{*}\in \Sigma$. Hence, $$F^{-1}(\gamma(s_k))=(t_k,x_k)\in [t_{x_{*}},\infty)\times U_{x_{*}}\subset I^{+}(t,x)=I^{+}(F^{-1}(p))\quad\hbox{for $k$ big enough.}$$ By composing with $F$, we deduce that $\gamma(s_k)\subset I^{+}(p)$ for $k$ big enough, as required. 
\end{proof}


In any case, we shall assume for the rest of this section that $\phi$ is proper, and fix a slice $\Sigma$. In this case, the map 
\begin{equation}
\label{eq:1}
\phi_{\Sigma}:= \phi|_{\mathbb{R}\times \Sigma}:\mathbb{R}\times \Sigma \rightarrow M
\end{equation}
is a diffeomorphism. Therefore, $\phi_{\Sigma}$ becomes an isometry if we consider the pullback metric $\phi_{\Sigma}^{*}g$ on $\mathbb{R}\times \Sigma$. Consider the smooth one-parameter family of 1-forms $\omega_{\tau}$ and a smooth one-parameter family of $(0,2)$-tensors $h_{\tau}$ on $\Sigma$ defined via the equations 
\begin{equation}\label{1formmetricinduced}
    \begin{array}{cc}
     \omega_{\tau}(V) := \phi_{\Sigma}^{*}g(\partial _{\tau}, \widetilde{V})   &  h_{\tau}(V,W) :=\phi_{\Sigma}^{*}g(\widetilde{V},\widetilde{W}),
        
    \end{array}
\end{equation}
where $V,W \in \mathfrak{X}(\Sigma)$, the tilde indicates their lift to $\mathbb{R}\times \Sigma$ through the canonical projection $\pi_2:\mathbb{R}\times \Sigma \rightarrow \Sigma$. Moreover, $\partial_{\tau}$ denotes the lift of the canonical vector field $d/dt$ on $\mathbb{R}$ through the canonical projection $\pi_1:\mathbb{R}\times \Sigma \rightarrow \mathbb{R}$, so that $X$ is the pushforward of $\partial_{\tau}$ through $\phi_{\Sigma}$. Finally, if we define the positive function $\beta \in C^{\infty}(\mathbb{R}\times \Sigma)$ by 
$$\beta := -\phi_{\Sigma}^{*}g(\partial _{\tau}, \partial_{\tau}),$$
we have
\begin{equation}
  \label{eq:3}
  \phi_{\Sigma}^{*}g=-\beta^2\, d\tau^{2} + (\pi_2^{*}\omega_{\tau})\otimes d\tau + d\tau \otimes (\pi_2^{*}\omega_{\tau}) + \pi_2^{*}h_{\tau}.
\end{equation}

\medskip

It is important to realize that, for a given $\tau$, the hypersurface $\{\tau \}\times \Sigma$ does not have to be spacelike in $(\mathbb{R}\times \Sigma,\phi_{\Sigma}^{*}g)$, and hence $h_{\tau}$ may not be positive-definite or even nondegenerate. However, as $g$ is a Lorentzian metric, we can define a $1$-parameter family of Riemannian metrics on $\Sigma$ by considering $\rho_{\tau} := h_{\tau} + \frac{\omega_{\tau}}{\beta}\otimes \frac{\omega_{\tau}}{\beta}$ (the fact that $\rho_{\tau}$ is Riemannian follows from the same arguments in \cite[Proposition 3.1]{HJ})). Even more, from $\omega_{\tau}$ and $h_{\tau}$ we can construct a one-parameter family of {\it pre-Finsler metrics} on $\Sigma$ with a strong connection with the causality of $(M,g)$. Let us recall the notion of pre-Finsler metric (see \cite{HJ} for details):

\begin{defi}
	\label{def:Ivan-General-Retocado-Jony:1}
	Let $\Sigma$ be a $n$-dimensional manifold and $T\Sigma$ its tangent bundle. We will call a \textrm{pre-Finsler metric} to a continuous map $F:T\Sigma\rightarrow \mathbb{R}$ satisfying:
	\begin{itemize}
		\item[(i)] $F:T\Sigma\setminus \left\{ \mathbf{0} \right\}\rightarrow \mathbb{R}$ is smooth, where $\mathbf{0}$ denotes the zero section.
				\item[(ii)] $F$ is positive homogeneous, that is, $F(\lambda v)=\lambda F(v)$ for all $v\in T\Sigma$ and $\lambda>0$.
	\end{itemize}
	The pair $(\Sigma,F)$ will be called a \textrm{pre-Finsler manifold}.
\end{defi}

Now observe that, following the ideas developed in \cite{CJS11,FHS13}, a curve $\gamma(s)=(\tau(s),x(s))$ is a future-directed (resp. past-directed) causal curve if $F^{+}_{\tau(s)}(\dot{x}(s))\leq\beta(\tau(s),x(s))\dot{\tau}(s)$ (resp. $F^{-}_{\tau(s)}(\dot{x}(s))\geq\beta(\tau(s),x(s))\dot{\tau}(s)$), where the pre-Finsler metrics $F^{\pm}_{\tau}:T\Sigma \rightarrow \mathbb{R}$ are given by 
\begin{eqnarray}\label{randerseq}
	F^{\pm}_{\tau}(v) &:=& \pm \frac{\omega_{\tau}(v)}{\beta} + \sqrt{\frac{\omega_{\tau}(v)^2}{\beta ^2} + h_{\tau}(v,v)} \\
	&=& \pm \frac{\omega_{\tau}(v)}{\beta} + \sqrt{\rho_{\tau}(v,v)} , \quad \forall v\in T\Sigma.\nonumber
\end{eqnarray}

\begin{exe}[Conformastationary spacetimes revisited]\label{conformastationary}
{\rm Suppose $X \in \mathfrak{X}(M)$ is a complete timelike conformal Killing vector field on $(M,g)$, i.e., there exists $\sigma \in C^{\infty}(M)$ such that 
$$\mathcal{L}_Xg = \sigma \cdot g.$$
Javaloyes and S\'{a}nchez have shown \cite{java} if $(M,g)$ is a {\it distinguising} spacetime (i.e., for any $p,q\in M$, $I^{\pm}(p)\neq I^{\pm}(q)$ whenever $p\neq q$), then there exists a {\it spacelike} slice $\Sigma$ for the flow $\phi$ of $X$. Furthermore, there exist a 1-form $\omega_0 \in \Omega^1(\Sigma)$ and a Riemannian metric $h_0$ on $\Sigma$ and a positive smooth function $\Omega \in C^{\infty}(\mathbb{R}\times \Sigma)$ for which the pullback metric assumes the {\it standard conformastationary} form
\begin{equation}\label{conformaeq}
 \phi_{\Sigma}^{*}g = \Omega^2(\tau,x)\left( - d\tau^{2} + (\pi_2^{*}\omega_{0})\otimes d\tau + d\tau \otimes (\pi_2^{*}\omega_{0}) + \pi_2^{*}h_{0}\right). 
\end{equation}
 It is clear from (\ref{conformaeq}) that $\partial_{\tau}$ is a (future-directed unit timelike) {\it Killing } vector field for the conformally rescaled spacetime $(\mathbb{R}\times \Sigma, \Omega ^{-2}(\phi^{*}g))$. In this rescaled spacetime, 
 the causality of the spacetime is characterized in terms of the Finsler metric (see \cite{FHS13})
 \begin{equation}
 	F^{\pm}(v)= \pm\omega_0(v)+\sqrt{\omega_0(v)^2 +h_0(v,v)}, \quad \forall v\in T\Sigma.\label{eq:2}
 \end{equation}}
\end{exe}

With previous notations, we introduce first a convenient definition. 
\begin{defi}[Tame vector fields]\label{tamedefi}
The complete timelike vector field $X\in \mathfrak{X}(M)$ is {\it future tame} [resp. {\it past tame}] if its flow $\phi$ is a proper $\mathbb{R}$-action and for some slice $\Sigma \subset M$, there exist a positive function $\mathfrak{a}:\mathbb{R}\rightarrow \mathbb{R}$, a $1$-form $\omega_0$ and a Riemannian metric $g_0$ on $\Sigma$ such that $\forall \tau \in \mathbb{R}, \forall v\in T_x\Sigma,\forall x\in \Sigma,$
\begin{equation}\label{bound1}
	F^{+}_{\tau}(v) \leq \mathfrak{a}(\tau)\beta(\tau,x)F^{+}(v) \quad\mbox{ [ resp. $F^{-}_{\tau}(v) \geq \mathfrak{a}(\tau)\beta(\tau,x)F^{-}(v)$ ]. }
\end{equation}
where $F^{+}_{\tau}$ [resp. $F^{-}_{\tau}$] and $F^+$ [resp. $F^-$] are defined in \eqref{randerseq} and \eqref{eq:2}; and 
\begin{equation}\label{bound2}
	\int_{0}^{\infty}\frac{1}{\mathfrak{a}(s)}\, ds = \infty \quad\mbox{ [ resp. $\displaystyle\int_{-\infty}^{0}\dfrac{1}{\mathfrak{a}(s)}\, ds = \infty$].}
\end{equation}
If $X$ is both future and past tame, then we simply say it is {\it tame}. 
\end{defi}

\medskip

The next result uses future/past tameness as a key criterion for future/past omniscience. 
\begin{teo}\label{main1}
Let $X$ be a future [past] tame complete future-directed timelike vector on the spacetime $(M,g)$. Then the foliation defined by the integral curves of $X$ is future [past] omniscient. 
\end{teo}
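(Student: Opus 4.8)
The plan is to use the Finsler description of causality recorded around \eqref{randerseq} to produce, for each point $p$ and each leaf of the foliation, one explicit future-directed timelike curve joining $p$ into that leaf; the two ingredients of future tameness then play complementary roles — the pointwise bound \eqref{bound1} makes the curve timelike, while the divergence \eqref{bound2} guarantees it is defined on the whole parameter interval we need. Concretely, fix the slice $\Sigma$ of Definition~\ref{tamedefi} and identify $M$ with $\mathbb{R}\times\Sigma$ through $\phi_\Sigma$, so that the metric has the form \eqref{eq:3} and the leaves of the foliation are precisely the lines $t\mapsto(t,x)$, $x\in\Sigma$. Let $\gamma_{x_0}$ be a leaf and $p=(\tau_p,x_p)$ an arbitrary point of $\mathbb{R}\times\Sigma$; future omniscience amounts to showing $p\in I^-(\gamma_{x_0})$ for every such choice. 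Since $M$ is a spacetime, hence connected, so is $\Sigma$; pick a smooth curve $x\colon[0,1]\to\Sigma$ with $x(0)=x_p$ and $x(1)=x_0$, and put $f(s):=F^+(\dot x(s))$, which is continuous and nonnegative (the Randers-type expression \eqref{eq:2}, built from a $1$-form and a Riemannian metric, is $\ge 0$).

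Next I would prescribe the time component so that it climbs rapidly enough. Set $G(\tau):=\int_{\tau_p}^{\tau}\frac{ds}{\mathfrak{a}(s)}$; this is a continuous, strictly increasing function with $G(\tau_p)=0$, and condition \eqref{bound2} is precisely the statement that $G$ maps $[\tau_p,\infty)$ onto $[0,\infty)$, hence is an increasing homeomorphism there. Therefore
\[
\tau(s):=G^{-1}\!\Bigl(\int_0^s\bigl(f(r)+1\bigr)\,dr\Bigr)
\]
is well defined for \emph{every} $s\in[0,1]$, is $C^1$, satisfies $\tau(0)=\tau_p$, and — differentiating the identity $G(\tau(s))=\int_0^s(f+1)\,dr$ — fulfils $\dot\tau(s)=\mathfrak{a}(\tau(s))\,\bigl(f(s)+1\bigr)>0$.

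Now consider $\gamma(s):=(\tau(s),x(s))$. It runs from $\gamma(0)=p$ to $\gamma(1)=(\tau(1),x_0)\in\gamma_{x_0}$, and it is future-directed timelike: indeed $\dot\tau>0$, and by \eqref{bound1} together with $\beta>0$,
\[
F^+_{\tau(s)}(\dot x(s))\;\le\;\mathfrak{a}(\tau(s))\,\beta(\tau(s),x(s))\,f(s)\;<\;\mathfrak{a}(\tau(s))\,\beta(\tau(s),x(s))\,\bigl(f(s)+1\bigr)\;=\;\beta(\tau(s),x(s))\,\dot\tau(s),
\]
so $\gamma$ satisfies the strict form of the causal criterion stated after \eqref{randerseq}, which characterizes future-directed timelike curves. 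Hence $p\in I^-(\gamma_{x_0})$; as $p$ and $\gamma_{x_0}$ were arbitrary, the foliation is future omniscient. The past-tame case follows verbatim after reversing the time orientation — equivalently, replacing $F^+,F^+_\tau$ by $F^-,F^-_\tau$, letting $\tau$ decrease, and using the bracketed halves of \eqref{bound1}--\eqref{bound2}.

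The step I expect to carry the actual content is the global definability of $\tau$ on all of $[0,1]$: morally one is integrating $\dot\tau=\mathfrak{a}(\tau)\,(f+1)$ forward from $\tau_p$, and if $\mathfrak{a}$ grew too quickly the solution would escape to $+\infty$ before the parameter reaches $1$; it is exactly the divergence $\int_0^{\infty}\frac{ds}{\mathfrak{a}(s)}=\infty$ of \eqref{bound2} that forbids this finite-parameter blow-up, and this is the sole place that hypothesis is used. Everything else — monotonicity of $G$, the change of variables, and the (standard) passage from the strict Finsler inequality to ``timelike'' — is routine.
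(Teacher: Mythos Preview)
Your proof is correct and follows essentially the same route as the paper's: both identify $M$ with $\mathbb{R}\times\Sigma$, choose a spatial curve joining the relevant points, and use the antiderivative of $1/\mathfrak{a}$ to prescribe the time component so that the tameness bound \eqref{bound1} forces the resulting curve to be causal. The only cosmetic differences are that the paper parametrizes the spatial curve by $F^{+}$-arc length and uses $\tau$ itself as the curve parameter (obtaining a causal curve, then appending a short segment of the leaf to get into $I^{+}$), whereas you keep an arbitrary parametrization, carry the factor $f(s)=F^{+}(\dot x(s))$ along, and insert the ``$+1$'' cushion to land directly on a strictly timelike curve.
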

\begin{proof}
The proofs for the future and past cases are analogous, so we just give the future case here. As above, we use the flow of $\phi$ to work on $(\mathbb{R}\times \Sigma, \phi_{\Sigma}^{*}g)$. Using the notation in (\ref{1formmetricinduced}), (\ref{eq:3}) and (\ref{randerseq}), observe that a curve $\sigma:[a,b]\rightarrow \mathbb{R}\times \Sigma$ with $\sigma(s)=(\tau(s),x(s))$ is a future-directed causal curve if and only if $\dot{\tau}(s)>0$ and:
\begin{equation}
\label{eq:5}
\beta(\tau(s),x(s))\dot{\tau}(s)\geq F^{+}_{\tau(s)}(\dot{x}(s)) \quad \forall s\in [a,b].
\end{equation}

\medskip

Fix $p \in M$ and an integral curve $\gamma$ of $X$ with $\phi_{\Sigma}(0,x_1) = \gamma(0) \in \Sigma$ for some $x_1 \in \Sigma$. Write $(\tau_0,x_0):= (\phi_{\Sigma})^{-1}(p)$. To establish omniscience, it is enough to show that one can find $\tau_1$ large enough so $(\tau_0,x_0)\ll (\tau_1,x_1)$ in $(\mathbb{R}\times \Sigma,\phi_{\Sigma}^{*}g)$, as this will imply that $p\ll \gamma(\tau_1)$ in $(M,g)$. 

Future tameness means that $\Sigma$ can be chosen so that there exist a positive function $\mathfrak{a}:\mathbb{R}\rightarrow \mathbb{R}$ satisfying \eqref{bound2}, a $1$-form $\omega_0$ and a Riemannian metric $g_0$ on $\Sigma$ such that 
\begin{equation}\label{boundproof}
    F^{+}_{\tau}(v) \leq \mathfrak{a}(\tau)\beta(\tau,x)F^+(v), \quad \forall \tau \in \mathbb{R}, \forall v\in T_x\Sigma,\forall x\in \Sigma,
\end{equation}
Since $\Sigma$ is connected we can pick a smooth curve $x:[0,\ell] \rightarrow \Sigma$ with $x(0) =x_0$ and $x(\ell) = x_1$ such that $F^+(\dot{x})=1$, i.e., $x$ is parametrized by $F^+$-arc length. Define $s:\mathbb{R}\rightarrow \mathbb{R}$ by 
\begin{equation}\label{boundproof2}
    s(t) := \int_{\tau_0}^t \frac{1}{\mathfrak{a}(\lambda)}\, d\lambda. 
\end{equation}
Let $\tau^{*}>\tau_0$ be such that $s(\tau^{*})=\ell$. Finally, define a curve  $\sigma(\lambda)=(\lambda,x(s(\lambda))[=:y(\lambda)])$ on $\mathbb{R}\times \Sigma$ with $\lambda \in [\tau_0,\tau^{*}]$. Clearly $\sigma(\tau_0)=(\tau_0,x_0)$ and $\sigma(\tau^{*}) = (\tau^{*},x_1)$. Moreover, using (\ref{boundproof}) we get 
\begin{eqnarray}
 F^{+}_{\lambda}(\dot{y}(\lambda)) &\leq& \mathfrak{a}(\lambda)\beta(\lambda,y(\lambda))F^+(\dot{y}(\lambda)) \nonumber \\
 &=& \beta(\lambda,y(\lambda)) \nonumber,
\end{eqnarray}
so $\sigma$ is future-directed causal by (\ref{eq:5}). That is, we have shown that $(\tau_0,x_0)\leq (\tau^{*},x_1) \ll (\tau_1,x_1)$, for any $\tau_1>\tau^{*}$. This completes the proof.
\end{proof}

\begin{rem}
{\rm Applying Theorem \ref{main1} in the context of Example \ref{conformastationary} we reestablish the Harris-Low result that the timelike vector field $X$ defines an omniscient foliation on $(M,-g(X,X)^{-1} g)$, and hence on the distinguishing conformastationary spacetime $(M,g)$.} 
\end{rem}

\section{Orthogonal splitting and omniscience}\label{sec2.2}

We consider here the particular case when the splitting provided in previous section is orthogonal, that is, $\omega_{\tau}\equiv 0$, and consequently, the metric on ${\mathbb R}\times \Sigma$ becomes 
\begin{equation}\label{ju}
g=-\beta^2\, d\tau^{2} + h_{\tau}\qquad\hbox{(abuse of notation here; compare with (\ref{eq:3})).}
\end{equation}
This simplification is specially relevant in the context of Bartnik conjecture because, as we will see later, it permits to take advantage of the TCC hypothesis in order to find better adapted conditions guaranteeing the omniscient character of the vector field $\partial_{\tau}$.

The main result of this section (Theorem \ref{io}) will require several hypotheses on both, the function $\beta^2$ and the spatial metric $h_{\tau}$. In order to gain a degree of freedom (useful to make these hypotheses more plausible), we will extract a generic conformal factor from the metric, and reparametrize the coordinate $\tau$ consistently, i.e. we will rewrite the metric (\ref{ju}) as follows 
\[
g=-\beta^2\, d\tau^{2} + h_{\tau}=\Omega^2\hat{g},\qquad\hbox{with $\hat{g}:=-dt^2+h_t$}.
\]
Clearly, this will be done at the price of probably loosing the complete character of the corresponding vector field 
$\partial_t$. Consequently, in the new coordinates we have $M\subset {\mathbb R}\times \Sigma$.

In this section the symbol $|\cdot|$ will be used to denote, indistinctly, the absolute value or the norm with metric $\hat{g}=-dt^2+h_t$. 
The symbol $|\cdot|_{t}$ will be reserved to denote the norm with metric $h_t$. Finally, the symbol $\|\cdot\|$ will denote the norm with metric $g$.

We begin with the following technical preliminar result.
\begin{lemma}\label{lll} Let $(M\subset {\mathbb R}\times \Sigma,\hat{g}=-dt^2+h_t)$ be a spacetime. Assume that
	\[
	|d^2/dt^2 h_t(\hat{v},\hat{w})|\quad\hbox{is bounded for all $h_0$-unit $\hat{v},\hat{w}\in T\Sigma$.}	
	\]
	Then, 
	\begin{equation}\label{xx}
		h_t(v,w)\leq \mathfrak{a}(t)^2h_0(v,w),\quad\hbox{with $\mathfrak{a}(t)=c_1t+c_2$,}\qquad\forall v,w\in T\Sigma.
	\end{equation}
	Moreover, given any finite interval $I$ with $I\times \Sigma\subset M$, the following terms, 
	\begin{equation}\label{ar'}
		\hat{{\rm Ric}}(\partial_t),\quad |H_{\Sigma_t}(x)|,\quad 
		d/dt\,|{\rm Jac}(Id_t)|,\quad 
		|\hat{\nabla}_{\partial_t}\partial_t|_t,\quad\hbox{are bounded above on $I\times \Sigma$,}
	\end{equation}
	with $H_{\Sigma_t}$ the mean curvature of $\Sigma_t\equiv (\Sigma,h_t)$ and $Id_t:\Sigma_t\rightarrow \Sigma_0$ the identity map.	
\end{lemma}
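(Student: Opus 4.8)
The plan is to obtain the estimate \eqref{xx} first, and then use it together with a Gronwall-type control to bound each of the four quantities in \eqref{ar'} on a finite slab $I\times\Sigma$. For \eqref{xx}, fix $v,w\in T\Sigma$ and consider the scalar function $f(t):=h_t(v,w)$. The hypothesis says that, after normalizing $v,w$ to be $h_0$-unit, $|f''(t)|\le C$ for a uniform constant $C$; by bilinearity this gives $|d^2/dt^2\, h_t(v,w)|\le C\, h_0(v,v)^{1/2}h_0(w,w)^{1/2}$ in general. Applying this to $v=w$ and integrating twice from $0$ yields $h_t(v,v)\le h_0(v,v)+ t\,|d/dt\,h_t(v,v)|_{t=0} + \tfrac12 C t^2\, h_0(v,v)$; absorbing the linear term (its coefficient $|d/dt\,h_0(v,v)|$ is continuous in $v$ on the $h_0$-unit sphere bundle, hence bounded on the compact-looking fibers---or one simply enlarges $c_1,c_2$) gives $h_t(v,v)\le (c_1 t+c_2)^2 h_0(v,v)$, and then polarization promotes this from the quadratic form to the full bilinear form, which is \eqref{xx}. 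The first mild subtlety is the ``uniformity in $x\in\Sigma$'' of these constants: since $\Sigma$ need not be compact, one reads the boundedness hypothesis as giving constants uniform over all of $\Sigma$, and all subsequent constants inherit this.

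Next, for the terms in \eqref{ar'} over a finite interval $I$ with $\bar I$ compact and $I\times\Sigma\subset M$: the key point is that \eqref{xx}, combined with the same estimate applied to $h_t^{-1}$ (which follows by the analogous two-sided reasoning, or by noting $\mathfrak a(t)$ stays bounded away from $0$ and $\infty$ on $\bar I$), shows that on $I\times\Sigma$ the metrics $h_t$ are all uniformly equivalent to $h_0$, with uniformly bounded $t$-derivatives of all the relevant orders (first and second derivatives suffice, and these are controlled by the hypothesis plus its consequences). From this, each listed quantity is a universal algebraic expression in $h_t$, $h_t^{-1}$, $\partial_t h_t$, $\partial_t^2 h_t$ and the spatial Christoffel symbols of $h_t$; I would argue term by term. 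The lapse-type term $|\hat\nabla_{\partial_t}\partial_t|_t$: since $\hat g=-dt^2+h_t$ has unit lapse, $\hat\nabla_{\partial_t}\partial_t$ is the spatial vector dual to $\tfrac12\partial_t h_t(\cdot,\cdot)$, whose $h_t$-norm is $\le \tfrac12\|\partial_t h_t\|_{h_t\text{-op}}$, bounded by the above. The second fundamental form / mean curvature: $\Sigma_t$ is a level set of $t$ in a unit-lapse splitting, so its shape operator is $-\tfrac12 h_t^{-1}\partial_t h_t$ and $H_{\Sigma_t}=-\tfrac12\,\mathrm{tr}_{h_t}(\partial_t h_t)$, again bounded. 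The Jacobian term: $|\mathrm{Jac}(Id_t)|=\sqrt{\det(h_0^{-1}h_t)}$, so $d/dt\,|\mathrm{Jac}(Id_t)| = \tfrac12 |\mathrm{Jac}(Id_t)|\,\mathrm{tr}_{h_t}(\partial_t h_t) = -|\mathrm{Jac}(Id_t)|\,H_{\Sigma_t}$, bounded by the previous two estimates together with the bound on $|\mathrm{Jac}(Id_t)|$ coming from \eqref{xx}.

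The genuinely delicate term is $\hat{\mathrm{Ric}}(\partial_t)$, the timelike Ricci curvature of $\hat g$ in the $\partial_t$ direction. By the Gauss--Codazzi / Riccati equation for a unit-lapse splitting, $\hat{\mathrm{Ric}}(\partial_t,\partial_t)$ equals $\mathrm{tr}_{h_t}$ of $-\tfrac12\partial_t^2 h_t + \tfrac14(\partial_t h_t) h_t^{-1}(\partial_t h_t)$ up to a standard rearrangement --- i.e. it is $-\partial_t H_{\Sigma_t} - |K_t|_{h_t}^2$ where $K_t$ is the second fundamental form (this is essentially Raychaudhuri's equation for the unit-lapse geodesic congruence $\partial_t$). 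Here is where the second-derivative hypothesis is doing real work: the term $\partial_t^2 h_t$ is exactly what the hypothesis bounds, and the remaining pieces are quadratic in $\partial_t h_t$, already controlled. Feeding in the uniform equivalence of $h_t$ with $h_0$ on the compact interval $\bar I$ (so that traces and inverses are uniformly bounded operations) yields the desired upper bound on $\hat{\mathrm{Ric}}(\partial_t)$. I expect the main obstacle to be bookkeeping rather than conceptual: one must be careful that the constants in \eqref{xx} and in the equivalence $h_t\simeq h_0$ depend on $I$ (they blow up as $|I|\to\infty$, which is why the statement restricts to finite $I$), and one must correctly identify the curvature formula --- the sign conventions of \cite{oneill} for $\hat{\mathrm{Ric}}(\partial_t)$ (presumably $\hat{\mathrm{Ric}}(\partial_t,\partial_t)$) and for the mean curvature must be tracked so that ``bounded above'' comes out on the correct side, the Raychaudhuri $-|K_t|^2$ term helpfully having the right sign.
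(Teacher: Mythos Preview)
Your approach is correct in substance and considerably more detailed than the paper's. The paper dispatches the lemma in two sentences: the bound on $|d^2/dt^2\,h_t(\hat v,\hat w)|$ means the second $t$-derivatives of the metric coefficients of $h_t$ are uniformly bounded, so integrating twice gives the quadratic bound \eqref{xx} directly; and each of the four quantities in \eqref{ar'} is, in local coordinates, an algebraic expression in the metric coefficients and their $t$-derivatives up to second order, hence bounded on any finite $t$-interval. That is the entire argument.

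Your route via explicit geometric formulas (shape operator $=-\tfrac12\,h_t^{-1}\partial_t h_t$, the Jacobian identity, Raychaudhuri for $\hat{\rm Ric}(\partial_t,\partial_t)$) reaches the same conclusion and is more informative, since it makes precise exactly which derivatives enter where, at the cost of more bookkeeping. One small slip: in the unit-lapse, zero-shift splitting $\hat g=-dt^2+h_t$ the curves $t\mapsto(t,x)$ are \emph{geodesics}, so $\hat\nabla_{\partial_t}\partial_t=0$ identically; you have confused the acceleration with the second fundamental form (it is the latter, not the former, that is encoded by $-\tfrac12\,\partial_t h_t$). This error is harmless for the boundedness claim. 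Also, your polarization step from the quadratic-form inequality to the full bilinear inequality in \eqref{xx} is not really meaningful (replace $v$ by $-v$); the paper's own statement is imprecise at this point, and only the diagonal case $v=w$ is ever used downstream.
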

\begin{proof}
 From the hypothesis, the second derivative of metric coefficients for $h_t$ are bounded above on $M$. So, the first assertion is direct, and the last one follows from the fact that the expressions in local coordinates of the terms in (\ref{ar'}) only depend on metric coefficients and their (at most second order) derivatives. 
 \end{proof}

\smallskip

\begin{thm}\label{io} Let $(M\subset {\mathbb R}\times \Sigma,g=\Omega^2(-dt^2+h_t))$ be a future and past timelike geodesically complete and {\em cosmological} spacetime, i.e. it is globally hyperbolic with a compact Cauchy surface $\Sigma$ and satisfies the TCC. Suppose that ${\rm vol}(\Sigma_t)$, for all $t\in {\mathbb R}$, is bounded below by some positive number, and that
	\begin{equation}\label{tar'}
		|\nabla_{\Sigma_0}(\log\Omega)'(\cdot,\cdot)|_0,\quad 
		|d^2/dt^2 h_t(\hat{v},\hat{w})|\quad\hbox{are also bounded $\forall$ $h_0$-unit $\hat{v},\hat{w}\in T\Sigma$.}
	\end{equation}
	Then, $\partial_t$ is omniscient, and consequently, $(M,g)$ satisfies the NOH condition. 
\end{thm}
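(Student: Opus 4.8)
The goal is to verify that, under the stated hypotheses, the vector field $\partial_t$ satisfies the tameness criterion of Definition \ref{tamedefi}, after which Theorem \ref{main1} and Proposition \ref{omnitonoh} deliver omniscience and the NOH condition. Since the splitting here is orthogonal ($\omega_\tau \equiv 0$), the Finsler metrics $F^\pm_\tau$ of \eqref{randerseq} reduce to $\sqrt{h_\tau(v,v)}$, and the comparison estimate \eqref{bound1} becomes a pointwise bound $h_t(v,v) \le \mathfrak{a}(t)^2 \beta(t,x)^2 h_0(v,v)$ (recall that for $\hat g = -dt^2 + h_t$ one has $\beta \equiv 1$, with the conformal factor $\Omega$ absorbed into $g$ but irrelevant for causality). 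The first step is therefore to invoke Lemma \ref{lll}: the boundedness of $|d^2/dt^2\, h_t(\hat v,\hat w)|$ on $h_0$-unit vectors, which is among the hypotheses \eqref{tar'}, yields the linear-growth bound \eqref{xx}, namely $h_t(v,w) \le \mathfrak{a}(t)^2 h_0(v,w)$ with $\mathfrak{a}(t) = c_1 t + c_2$. A linear $\mathfrak{a}$ satisfies the integral divergence condition \eqref{bound2}, so this is exactly the data needed to conclude tameness of $\partial_t$ \emph{for the metric $\hat g$}.

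The subtlety — and where the real work lies — is that $\partial_t$ need not be complete for $\hat g$, because extracting the conformal factor $\Omega$ may shrink the $t$-range so that only $M \subset \mathbb{R}\times\Sigma$ is covered. Theorem \ref{main1} as stated assumes a complete timelike vector field spanning a foliation of \emph{all} of $M$. So the core of the argument must be to show that, under the remaining hypotheses — future and past timelike geodesic completeness of $(M,g)$, the TCC, compactness of the Cauchy surface $\Sigma$, the volume lower bound $\mathrm{vol}(\Sigma_t) \ge \delta > 0$, and the bound on $|\nabla_{\Sigma_0}(\log\Omega)'|_0$ — the slices $\Sigma_t$ in fact exhaust $M$, i.e. $M = \mathbb{R}\times\Sigma$ in the $(t,h_t)$-coordinates, or at least that the $t$-coordinate ranges over a half-line in each direction sufficient to run the causal-curve construction in the proof of Theorem \ref{main1}. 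This is a completeness/extendibility argument: one shows that the inextendible integral curves of $\partial_t$ have infinite affine-type parameter. The standard mechanism is a Raychaudhuri-type estimate. Timelike geodesic completeness of $(M,g)$ combined with the TCC forces control on the expansion of the geodesic congruence normal to the $\Sigma_t$; the volume lower bound prevents the leaves from collapsing (which would otherwise signal a finite-time singularity and hence a bounded $t$-range); and the Lemma \ref{lll} bounds on $\hat{\mathrm{Ric}}(\partial_t)$, $|H_{\Sigma_t}|$, $d/dt\,|\mathrm{Jac}(Id_t)|$ and $|\hat\nabla_{\partial_t}\partial_t|_t$, together with the gradient bound on $(\log\Omega)'$, let one transfer between the $g$-geodesic completeness and the $\hat g$-parameter $t$, controlling how $\Omega$ relates affine parameter along $g$-geodesics to $t$. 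Concretely I would: (1) express a future-inextendible integral curve of $\partial_t$, reparametrize it as a $g$-geodesic or compare it to one, and use the conformal relation between the parametrizations; (2) apply the Raychaudhuri equation for the $\Sigma_t$-congruence under TCC to bound the mean curvature / Jacobian evolution, using the volume bound to exclude focal collapse in finite $t$; (3) conclude the $t$-parameter is unbounded above (and, symmetrically, below), so the leaves exhaust $M$.

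With that exhaustion in hand, the endgame is short: $\partial_t$ (or rather $\Omega^{-?}\partial_t$ suitably rescaled to be complete, or directly $X := \partial_\tau$ in the original complete parametrization, using conformal invariance of omniscience noted in the introduction) spans a timelike foliation of all of $M$, the orthogonal-splitting version of the tameness inequality \eqref{bound1} holds with $\mathfrak a(t) = c_1 t + c_2$ by Lemma \ref{lll}, and \eqref{bound2} holds since $\int^{\pm\infty} 1/(c_1 s + c_2)\,ds = \infty$. Hence $\partial_t$ is tame, Theorem \ref{main1} gives that its foliation is both future and past omniscient, and since $(M,g)$ is cosmological — in particular globally hyperbolic with compact Cauchy surface $\Sigma$, so the flow admits a compact slice and the spacetime is strongly causal — Proposition \ref{omnitonoh} upgrades omniscience to the full NOH condition, and in particular rules out null lines. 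I expect the geodesic-completeness-to-$t$-completeness step (step (2) above, the Raychaudhuri/volume argument) to be the main obstacle, since it is the only place the cosmological hypotheses (TCC, volume bound) genuinely enter and it requires carefully marrying the conformal change $g = \Omega^2 \hat g$ with the singularity-theorem machinery.
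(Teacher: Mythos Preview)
Your overall architecture matches the paper's exactly: reduce to completeness of $\partial_t$, then invoke Lemma \ref{lll} to get the linear bound $\mathfrak{a}(t)=c_1t+c_2$ satisfying \eqref{bound2}, conclude tameness, and apply Theorem \ref{main1} plus Proposition \ref{omnitonoh}. You also correctly locate the hard step and correctly observe that timelike geodesic completeness of $(M,g)$ forces $\Omega$ to be unbounded on any finite strip $[0,t_\infty)\times\Sigma$ (your step (1)).

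The gap is in your step (2). A Raychaudhuri/focal-point argument for the $\Sigma_t$-congruence does not directly work, because the TCC is a hypothesis on $g$, while the $\partial_t$-congruence is geodesic only for $\hat g$; and the obstruction to finite $t_\infty$ is not focal collapse of the slices but blow-up of $\Omega$. The paper bridges this via the conformal Ricci change formula: writing $f=\log\Omega$ and evaluating $Ric_g(\partial_t)=\hat{Ric}(\partial_t)+\text{(derivatives of }f\text{)}$ yields a second-order equation of the form $n f'' - \Delta_{\Sigma_t} f = -f' H_{\Sigma_t} + C(t,x) + (n-1)|\nabla_{\Sigma_t}f|_t^2 + n\,h_t(\nabla_{\Sigma_t}f,\hat\nabla_{\partial_t}\partial_t)$, where TCC bounds $Ric_g(\partial_t)$ from below and Lemma \ref{lll} bounds the remaining coefficients on $[0,t_\infty)\times\Sigma$. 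Multiplying by $f'$ and integrating over $\Sigma_t$ and in $t$ produces an energy inequality for $E_f(t)=\tfrac12\int_{\Sigma_t}\bigl(n(f')^2+|\nabla_{\Sigma_t}f|_t^2\bigr)$, to which Gr\"onwall applies, giving $E_f$ bounded on $[0,t_\infty)$. The hypothesis $|\nabla_{\Sigma_0}(\log\Omega)'|_0$ bounded now forces $f'(t_k,\cdot)\to\infty$ uniformly on $\Sigma$ along some sequence $t_k\to t_\infty$, and the volume lower bound $\mathrm{vol}(\Sigma_{t_k})\geq\delta>0$ then makes $E_f(t_k)\to\infty$, a contradiction. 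So the roles of the cosmological data are: TCC feeds into the conformal Ricci identity (not into a raw Raychaudhuri inequality), the gradient bound on $(\log\Omega)'$ propagates pointwise blow-up of $f'$ across the whole slice, and the volume bound converts that into energy blow-up.
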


\begin{proof}
It suffices to show that $\partial_t$ is complete. In fact, in this case, we only need to show that $\partial_t$ is future omniscient for the conformally related spacetime $(M={\mathbb R}\times \Sigma,-dt^2+h_t)$ (for the past the argument is analogous). From Lemma \ref{lll} (recall (\ref{tar'})), we have 
\[
\sqrt{h_t(v,v)}\leq \mathfrak{a}(t)\sqrt{h_0(v,v)},\quad\hbox{with $\mathfrak{a}(t)=c_1t+c_2$,}\qquad\forall v\in T\Sigma,\; \forall t\in \mathbb{R},
\]
where, in particular,
\[
\int_0^\infty dt/\mathfrak{a}(t)=\infty.
\] 
Hence, $\partial_t$ is future tame in $({\mathbb R}\times \Sigma,-dt^2+h_t)$, and we conclude by applying Theorem \ref{main1}.


So, assume by contradiction that 
$\partial_t$ is incomplete in $M$. Let $0<t_{\infty}<\infty$ be such that $[0,t_{\infty})\times \Sigma\subset M$ is a non-precompact subset of $M$. 
Then, the function $\Omega$ must be
unbounded on $[0,t_{\infty})\times \Sigma\subset M$. In fact, assume by contradiction that $\Omega$ is bounded above by some positive number $C$. Let $\gamma:[0,\infty)\rightarrow M$, $\gamma(s)=(t(s),x(s))$, be a future-inextendible (and thus, future-complete) timelike geodesic with $t(s)\rightarrow t_{\infty}$ as $s\rightarrow\infty$. Then,
\[
\|\dot{\gamma}(s)\|=\sqrt{-g(\dot{\gamma}(s),\dot{\gamma}(s))}\leq\Omega(t(s),x(s))\dot{t}(s).
\]
Therefore,
\[
\infty={\rm length}(\gamma)=\int_0^{\infty}\|\dot{\gamma}(s)\|ds\leq \int_0^{\infty}\Omega(t(s),x(s))\dot{t}(s)ds\leq C\int_0^{\infty}\dot{t}(s)ds=C(t_{\infty}-t(0)),
\]
in contradiction to the fact that $t_{\infty}$ is finite. So, $\Omega$ is
unbounded on $[0,t_{\infty})\times \Sigma\subset M$.

The next ingredient is the well-known formula relating the Ricci tensor of the conformally related metrics $g$, $\hat{g}$ (see, e.g., p. 59 of \cite{besse})
\begin{eqnarray}
	\label{rescaling1'}
	Ric &=& \hat{Ric} + (1-n) \left[ \hat{Hess} _{\log \Omega} - d \log \Omega \otimes d \log \Omega \right] \nonumber \\
	& & -\left[ \hat{\triangle }\log \Omega + (n-1) \hat{g}(\hat{\nabla }\log \Omega, \hat{\nabla }\log \Omega) \right] \hat{g}.
\end{eqnarray}
If we evaluate formula (\ref{rescaling1'}) at $\partial_t\equiv \partial_t\mid_{(t,x)}$, we get, 
\begin{equation}\label{ki'}
	\begin{array}{c}
		Ric(\partial_t)= \hat{Ric}(\partial_t) + (1-n) \left[ \hat{Hess} _{\log \Omega}(\partial_t) - (\log\Omega)'(t,x)^2 \right] \\
		\qquad\qquad\qquad\quad +\left[ \hat{\triangle }\log \Omega(t,x) + (n-1) \hat{g}(\hat{\nabla }\log \Omega(t,x), \hat{\nabla }\log \Omega(t,x)) \right].
	\end{array}
\end{equation}
On the other hand, (recall \cite[Sublemma 14.33]{beembook}),
\[
\begin{array}{rl}
	\hat{\triangle }\log \Omega(t,x) & = \triangle_{\Sigma_t}\log \Omega(t,x)-\hat{g}(\hat{\nabla}\log\Omega(t,x),\partial_t)H_{\Sigma_t}(x)-\hat{{\rm Hess}}_{\log\Omega}(\partial_t) \\ & =\triangle_{\Sigma_t}\log \Omega(t,x)-(\log\Omega)'(t,x)H_{\Sigma_t}(x)-\hat{{\rm Hess}}_{\log\Omega}(\partial_t),
\end{array}
\]
where $\triangle_{\Sigma_t}$ and $H_{\Sigma_t}$ denote the laplacian and mean curvature associated to $\Sigma_t$, resp. So, formula (\ref{ki'}) can be rewritten as
\begin{equation}\label{kkii'}
	\begin{array}{c}
		Ric(\partial_t) =\hat{Ric}(\partial_t) -n \hat{Hess} _{\log \Omega}(\partial_t) + (n-1)(\log\Omega)'(t,x)^2 \\
		+\triangle_{\Sigma_t}\log \Omega(t,x)-(\log\Omega)'(t,x)H_{\Sigma_t}(x) + (n-1) \hat{g}(\hat{\nabla }\log \Omega(t,x), \hat{\nabla }\log \Omega(t,x)). 
	\end{array}
\end{equation}
Recall now that ($\nabla_{\Sigma_t}$ denotes the gradient in $\Sigma_t\equiv (\Sigma,h_t)$)
\begin{equation}\label{kii'}
	\begin{array}{rl}
		\hat{Hess}_{\log\Omega}(\partial_t) &
		=(\log\Omega)''-\hat{g}(\hat{\nabla}\log\Omega,\hat{\nabla}_{\partial_t}\partial_t) \\ & =(\log\Omega)''+\hat{g}((\log\Omega)'\partial_t,\hat{\nabla}_{\partial_t}\partial_t)-\hat{g}(\nabla_{\Sigma_t}\log\Omega,\hat{\nabla}_{\partial_t}\partial_t)
		\\ & =(\log\Omega)''+(\log\Omega)'\hat{g}(\partial_t,\hat{\nabla}_{\partial_t}\partial_t)-\hat{g}(\nabla_{\Sigma_t}\log\Omega,\hat{\nabla}_{\partial_t}\partial_t)
		\\ & =(\log\Omega)''-h_t(\nabla_{\Sigma_t}\log\Omega,\hat{\nabla}_{\partial_t}\partial_t),	
	\end{array}
\end{equation}
where we have used
\begin{equation}\label{kl'}
	\begin{array}{c}
		\hat{\nabla}\log\Omega(t,x)=-(\log\Omega)'(t,x)\partial_t+	
		\nabla_{\Sigma_t}\log\Omega(t,x)\quad\hbox{and}\quad \hat{g}(\partial_t,\hat{\nabla}_{\partial_t}\partial_t)\equiv 0.
	\end{array}
\end{equation}
Again from (\ref{kl'}) we have
\begin{equation}\label{kiii'}
	\hat{g}(\hat{\nabla }\log \Omega(t,x), \hat{\nabla }\log \Omega(t,x))=-(\log\Omega)'(t,x)^2+|\nabla_{\Sigma_t}\log\Omega(t,x)|_t^2.
\end{equation}
Therefore, if we take into account (\ref{kii'}) and (\ref{kiii'}) into (\ref{kkii'}), we get
\[
\begin{array}{c}
	n(\log\Omega)''(t,x)-\triangle_{\Sigma_t}\log \Omega(t,x)= -(\log\Omega)'(t,x)H_{\Sigma_t}(x) 
	+\hat{Ric}(\partial_t)-Ric(\partial_t) \qquad\qquad\qquad
	\\  \qquad\qquad\qquad\qquad\qquad\qquad\quad
	+(n-1)|\nabla_{\Sigma_t}\log\Omega(t,x)|_t^2
	+nh_t(\nabla_{\Sigma_t}\log\Omega,\hat{\nabla}_{\partial_t}\partial_t).	
\end{array}
\] 
Denoting $f(t,x):=\log\Omega(t,x)$, this last inequality can be rewritten as 
\begin{equation}\label{uno'}
	nf'' - \triangle_{\Sigma_t}f= -f'H_{\Sigma_t}(x) +C(t,x)+(n-1)|\nabla_{\Sigma_t}f|_t^2+nh_t(\nabla_{\Sigma_t}f,\hat{\nabla}_{\partial_t}\partial_t), 
\end{equation}
where the following terms are bounded above on $[0,t_{\infty})\times \Sigma$ by (\ref{tar'}) and Lemma \ref{lll}: 
\begin{equation}\label{ya'}
	|H_{\Sigma_t}|,\quad C(t,x):=\hat{Ric}(\partial_t)-Ric(\partial_t),\quad |\nabla_{\Sigma_t}f|_t^2,\quad h_t(\nabla_{\Sigma_t}f,\hat{\nabla}_{\partial_t}\partial_t). 
\end{equation}
Moreover, given $t_0\leq t<t_\infty$, the following expressions hold:
\[
\begin{array}{c}
	\int_{t_0}^t\int_{\Sigma_{\tau}}f'(\tau,x)f''(\tau,x)dxd\tau=	
	\\ =
	\frac{1}{2}\int_{t_0}^t\int_{\Sigma_{\tau}}\frac{d}{d\tau}f'(\tau,x)^2dxd\tau
	=\frac{1}{2}\int_{t_0}^t\int_{\Sigma_{0}}\frac{d}{d\tau}f'(\tau,x)^2|{\rm Jac}(Id_{\tau})|dxd\tau 
	\\ 
	=\frac{1}{2}\int_{t_0}^t\frac{d}{d\tau}\int_{\Sigma_{0}}f'(\tau,x)^2|{\rm Jac}(Id_{\tau})|dxd\tau
	-\frac{1}{2}\int_{t_0}^t\int_{\Sigma_{0}}f'(\tau,x)^2\frac{d}{d\tau}|{\rm Jac}(Id_{\tau})|dxd\tau
	\\ =
	\frac{1}{2}\int_{t_0}^t\frac{d}{d\tau}\int_{\Sigma_{\tau}}f'(\tau,x)^2dxd\tau
	-\frac{1}{2}\int_{t_0}^t\int_{\Sigma_{0}}f'(\tau,x)^2\frac{d}{d\tau}|{\rm Jac}(Id_{\tau})|dxd\tau
	\\ =
	\frac{1}{2}\int_{\Sigma_{t}}f'(t,x)^2dx-\frac{1}{2}\int_{\Sigma_{0}}f'(0,x)^2dx
	-\frac{1}{2}\int_{t_0}^t\int_{\Sigma_{0}}f'(\tau,x)^2\frac{d}{d\tau}|{\rm Jac}(Id_{\tau})|dxd\tau,
\end{array}
\]
(in the third and fifth equalities we have used the change of variables theorem applied to the diffeomorphism $Id_{\tau}:\Sigma_{\tau}\equiv (\Sigma,h_{\tau})\rightarrow \Sigma_{0}\equiv (\Sigma,h_{0})$). For $A>0$ big enough, we have
\[
\begin{array}{c}
	-\int_{t_0}^t\int_{\Sigma_{\tau}}f'(\tau,x)\Delta_{\Sigma_{\tau}}f(\tau,x)dxd\tau=\int_{t_0}^t\int_{\Sigma_{\tau}}h_{\tau}(\nabla_{\Sigma_{\tau}}f(\tau,x),\nabla_{\Sigma_{\tau}}f'(\tau,x))dxd\tau
	%
	%
	\\ \geq
	\frac{1}{2}\int_{t_0}^t\int_{\Sigma_{\tau}}\frac{d}{d\tau}|\nabla_{\Sigma_{\tau}}f(\tau,x)|_{\tau}^2dxd\tau  - A\int_{t_0}^t\int_{\Sigma_{0}}|\nabla_{\Sigma_0}f(\tau,x)|_0^2|{\rm Jac}(Id_{\tau})| dxd\tau
	\\ =
	\frac{1}{2}\int_{t_0}^t\int_{\Sigma_{0}}\frac{d}{d\tau}|\nabla_{\Sigma_{\tau}}f(\tau,x)|_{\tau}^2|{\rm Jac}(Id_{\tau})|dxd\tau  - A\int_{t_0}^t\int_{\Sigma_{0}}|\nabla_{\Sigma_0}f(\tau,x)|_0^2|{\rm Jac}(Id_{\tau})| dxd\tau
	\\  =
	\frac{1}{2}\int_{t_0}^t\frac{d}{d\tau}\int_{\Sigma_{0}}|\nabla_{S_{\tau}}f(\tau,x)|_{\tau}^2|{\rm Jac}(Id_{\tau})|dxd\tau 
	-\frac{1}{2}\int_{t_0}^t\int_{\Sigma_{0}}|\nabla_{\Sigma_{\tau}}f(\tau,x)|_{\tau}^2\frac{d}{d\tau}|{\rm Jac}(Id_{\tau})|dxd\tau 
	\\ 
	- A\int_{t_0}^t\int_{\Sigma_{0}}|\nabla_{\Sigma_0}f(\tau,x)|_0^2|{\rm Jac}(Id_{\tau})| dxd\tau
	\\ =
	\frac{1}{2}\int_{t_0}^t\frac{d}{d\tau}\int_{\Sigma_{\tau}}|\nabla_{\Sigma_{\tau}}f(\tau,x)|_{\tau}^2dxd\tau
	-\frac{1}{2}\int_{t_0}^t\int_{\Sigma_{0}}|\nabla_{\Sigma_{\tau}}f(\tau,x)|_{\tau}^2\frac{d}{d\tau}|{\rm Jac}(Id_{\tau})|dxd\tau \\ - A\int_{t_0}^t\int_{\Sigma_{0}}|\nabla_{\Sigma_0}f(\tau,x)|_0^2|{\rm Jac}(Id_{\tau})| dxd\tau
	\\ =
	\frac{1}{2}\int_{\Sigma_{t}}|\nabla_{\Sigma_{t}}f(t,x)|_t^2dx-\frac{1}{2}\int_{\Sigma_{0}}|\nabla_{\Sigma_{0}}f(0,x)|_0^2dx
	-\frac{1}{2}\int_{t_0}^t\int_{\Sigma_{0}}|\nabla_{\Sigma_{\tau}}f(\tau,x)|_{\tau}^2\frac{d}{d\tau}|{\rm Jac}(Id_{\tau})|dxd\tau \\ - A\int_{t_0}^t\int_{\Sigma_{0}}|\nabla_{\Sigma_0}f(\tau,x)|_0^2|{\rm Jac}(Id_{\tau})| dxd\tau,
\end{array}
\]
where, in the first equality, we have applied the identities
\[
{\rm div}_{\Sigma_{\tau}}(f'\nabla_{\Sigma_{\tau}}f)=h_{\tau}(\nabla_{\Sigma_{\tau}}f',\nabla_{\Sigma_{\tau}}f)+f'\Delta_{\Sigma_{\tau}}f,\qquad \int_{\Sigma_{\tau}}{\rm div}_{\Sigma_{\tau}}(f'\nabla_{\Sigma_{\tau}}f)dx=0, 
\]
and, for the inequality (in the second line), we have used that
	\[
	\begin{array}{c}
		h_{\tau}(\nabla_{\Sigma_{\tau}}f,\nabla_{\Sigma_{\tau}}f')=
		h_{\tau}^{ij}\frac{\partial f}{\partial x^i}\frac{\partial f'}{\partial x^j}=
		a^{ij}(\tau)h_0^{ij}\frac{\partial f}{\partial x^i}\frac{\partial f'}{\partial x^j} \\ =\frac{1}{2}\left(a^{ij}(\tau)h_0^{ij}\frac{\partial f}{\partial x^i}\frac{\partial f}{\partial x^j}\right)' - \frac{1}{2}{a^{ij}}'(\tau)h_0^{ij}\frac{\partial f}{\partial x^i}\frac{\partial f}{\partial x^j} \\
		\geq \frac{1}{2}\left(h_{\tau}^{ij}\frac{\partial f}{\partial x^i}\frac{\partial f}{\partial x^j}\right)' - A(\tau)h_0^{ij}\frac{\partial f}{\partial x^i}\frac{\partial f}{\partial x^j}=\frac{1}{2}\frac{d}{d\tau}|\nabla_{\Sigma_{\tau}} f|^2_{\tau}-A|\nabla_{\Sigma_0} f|^2_0\quad\hbox{for $A>0$ big enough.} 
	\end{array}
	\]
Next, if we multiply (\ref{uno'}) by $f'$, integrate it on $[t_0,t]$ and $\Sigma_t$ (recall that $\Sigma_t$ is compact), and use previous expressions,
we obtain that (\ref{uno'}) translates into the inequality,
\begin{equation}\label{equ1'}
	\begin{array}{rl}
		E_f(t)-E_f(t_0)\leq  &
		-\int_{t_0}^{t}\int_{\Sigma_{\tau}} f'(\tau,x)^2 H(\tau,x)dxd\tau + \int_{t_0}^{t}\int_{\Sigma_{\tau}} f'(\tau,x) C(\tau,x)dxd\tau \\ &
		+(n-1)\int_{t_0}^{t}\int_{\Sigma_{\tau}} f'(\tau,x) |\nabla_{\Sigma_{\tau}}f(\tau,x)|_{\tau}^2dxd\tau
		\\ &
		+n\int_{t_0}^{t}\int_{\Sigma_{\tau}} f'(\tau,x) h_t(\nabla_{\Sigma_{\tau}}f,\hat{\nabla}_{\partial_{\tau}}\partial_{\tau})dxd\tau
		\\ &
		+\frac{1}{2}\int_{t_0}^{t}\int_{\Sigma_{0}}(nf'(\tau,x)^2+|\nabla_{\Sigma_{\tau}}f(\tau,x)|_{\tau}^2)\frac{d}{d\tau}|{\rm Jac}(Id_{\tau})|dxd\tau \\ & + A\int_{t_0}^t\int_{\Sigma_{0}}|\nabla_{\Sigma_0}f(\tau,x)|_0^2|{\rm Jac}(Id_{\tau})| dxd\tau
	\end{array}
\end{equation}
where
\[
E_{f}(t):=\frac{1}{2}\int_{\Sigma_t} (nf'(t,x)^2 + |(\nabla_{\Sigma_t} f)(t,x)|_t^2)dx.
\]

In a first stage, assume that $f'(t,x)\geq 0$ for any $t_0\leq t<t_{\infty}$. If we take $\Lambda>0$ big enough,
the following inequality holds:
\begin{equation}\label{equ3'}
	\begin{array}{rl}
		\Lambda\int_{t_0}^{t}E_f(\tau)d\tau+\Lambda (t-t_0)\geq &
		-\int_{t_0}^{t}\int_{\Sigma_{\tau}} f'(\tau,x)^2 H(\tau,x)dxd\tau + \int_{t_0}^{t}\int_{\Sigma_{\tau}}f'(\tau,x) C(\tau,x)dxd\tau 
		\\ & +(n-1)\int_{t_0}^{t}\int_{\Sigma_{\tau}} f'(\tau,x)|\nabla_{\Sigma_{\tau}} f(\tau,x)|_{\tau}^2dxd\tau
		\\ &
		+n\int_{t_0}^{t}\int_{\Sigma_{\tau}} f'(\tau,x) h_t(\nabla_{\Sigma_{\tau}}f,\hat{\nabla}_{\partial_{\tau}}\partial_{\tau})dxd\tau
		\\ & +\frac{1}{2}\int_{t_0}^{t}\int_{\Sigma_{0}}(nf'(\tau,x)^2+|\nabla_{\Sigma_{\tau}}f(\tau,x)|_{\tau}^2)\frac{d}{d\tau}|{\rm Jac}(Id_{\tau})|dxd\tau \\ & +  A\int_{t_0}^t\int_{\Sigma_{0}}|\nabla_{\Sigma_0}f(\tau,x)|_0^2|{\rm Jac}(Id_{\tau})| dxd\tau.
	\end{array}
\end{equation}
Therefore, putting together (\ref{equ1'}) and (\ref{equ3'}), we have
\begin{equation}\label{contr'}
	E_{f}(t)-E_{f}(t_0)\leq
	\Lambda\int_{t_0}^{t}E_f(\tau)d\tau+\Lambda (t-t_0)\qquad\hbox{for all $t\in [t_0,t_{\infty})$.}
\end{equation}
By Grönwall's inequality,
\begin{equation}\label{ff''}
	E_f(t)\leq G(t)(<G(t_{\infty}))<\infty\qquad\hbox{for all $t\in [t_0,t_{\infty})$.}
\end{equation}
where
\[
G:[t_0,\infty)\rightarrow \mathbb{R},\qquad G(t):=(\Lambda(t-t_0)+E_f(t_0))e^{\Lambda (t-t_0)}.
\] 
Recall now that $f(t,x)$ and, consequently, $f'(t,x)$ are unbounded above on $[t_0,t_{\infty})\times S$; hence, there exists a sequence $\{(t_k,x_k)\}\subset M$ with $t_k\rightarrow t_{\infty}$ such that $f'(t_k,x_k)\rightarrow \infty$. This joined to the fact that
\begin{equation}\label{tarr'}
	|\nabla_{\Sigma_t}f'(t,x)|_t\leq k|\nabla_{\Sigma_0}f'(t,x)|_0\quad\hbox{is bounded on $M$ (recall (\ref{tar'})),}
\end{equation}
implies that $f'(t_k,x)\geq c_k\rightarrow\infty$ for any $x\in \Sigma$. In conclusion, 
\[
2E_f(t_k)\geq n\int_{\Sigma_{t_k}}f'(t_k,x)^2dx\geq n\, c_k^2\, {\rm vol}(\Sigma_{t_k})\rightarrow\infty, 
\]
in contradiction with the inequality (\ref{ff''}). 

Next, consider the remaining possibility, that is, $f'(t_k,x_k)<0$ for some sequence $\{(t_k,x_k)\}\subset M$ with $t_k\rightarrow t_{\infty}$. This joined to the fact that $f'$ is unbounded above on $[t_0,t_{\infty})\times \Sigma$, and the inequality (\ref{tarr'}), provides sequences $\{t^-_k\}, \{t^+_k\}\subset M$ with $t_k^-<t_k^+<t_{k+1}^-$ and $t_k^-, t_k^+\rightarrow t_\infty$ such that
\begin{equation}\label{cero'}
	f'(t_k^-,x)\leq c_0,\qquad f'(t_k^+,x)\geq c_k\rightarrow \infty\quad\forall x\in \Sigma
\end{equation}
\begin{equation}\label{unoll'}
	\
	f'(t,x)\geq 0\quad\forall x\in \Sigma,\quad\forall t\in [t_k^-,t_k^+]. 
\end{equation}
In fact, assume that $f'(t_k^+,x_k^+)\rightarrow\infty$. We can suppose without restriction that $t_k<t_k^+<t_{k+1}$. For each $k$, choose $t_k^-$ as the smallest value in $(t_k,t_k^+)$ satisfying $f'(t_k^-,x)\geq 0$ for all $x\in \Sigma$.

Now, we repeat the argument developed in previous case. More precisely, we begin with the inequality (deduced from (\ref{equ1'}) by taking into account (\ref{unoll'}))
\begin{equation}\label{contr''}
	E_{f}(t)-E_{f}(t_k^-)\leq
	\Lambda\int_{t_k^-}^{t}E_f(\tau)d\tau+\Lambda (t-t_k^-)\qquad\forall t\in [t_k^-,t_k^+]. 
\end{equation}
By Grönwall's inequality,
\begin{equation}\label{ff'}
	E_f(t)\leq G_k(t) \stackrel{(\ref{ya'})(\ref{cero'})}{<} (\Lambda t_{\infty}+nc_0^2D\,{\rm vol}(\Sigma_0)) e^{\Lambda t_{\infty}}\qquad\forall t\in [t_k^-,t_k^+],
\end{equation}
for some $D>0$, where
\[
G_k:[t_k^-,t_k^+]\rightarrow \mathbb{R},\qquad G_k(t):=(\Lambda(t-t_k^-)+E_f(t_k^-))e^{\Lambda (t-t_k^-)}.
\] 
This is in contradiction with the limit
\[ 
2E_f(t^+_k)\stackrel{(\ref{cero'})}{\geq} nc_k^2\, {\rm vol}(\Sigma_{t_k^+})\rightarrow\infty. 
\]
\end{proof}

\begin{rem} {\rm (1) From the proof of Theorem \ref{io} it becomes evident that TCC hypothesis can be weakened to the boundedness from below of ${\rm Ric}(\partial_t)$.
	
(2) It is well-known that any globally hyperbolic spacetime can be written in the form $({\mathbb R}\times \Sigma,f^2(-d\tau^2+h_{\tau}))$ (see \cite{bernalsanchez}). So, Theorem \ref{io} is applicable to any such spacetime admitting a reajustment of the  conformal factor compatible with its hypotheses. 
}
\end{rem}

\begin{exe} {\rm Let $({\mathbb R}\times \Sigma,g=-d\tau^2+\alpha^2(\tau,x) h_{0})$ be a timelike geodesically complete cosmological spacetime with $|(\nabla_{\Sigma_0}\alpha')(\tau,x)|_0$ bounded. The NOH condition can be guaranteed by applying Theorem \ref{io}. In fact, by making the change of coordinates $t=t(\tau,x)$ determined by the condition $\dot{t}(\tau,x)=\Omega^{-1}(\tau,x)$, with $\Omega:=\alpha$, we deduce  $$({\mathbb R}\times S,g=-d\tau^2+\alpha^2(\tau,x) h_{0})\cong (M\subset {\mathbb R}\times \Sigma,\alpha^2(t,x)(-dt^2+h_{0})),$$ which clearly falls under the hypotheses of Theorem \ref{io}. In particular, this shows that Bartnik conjecture holds for the class of warped spacetimes (i.e., $\alpha^2(\tau,x)\equiv \alpha^2(\tau)$).}
\end{exe}

\begin{exe} {\rm Let $({\mathbb R}\times \Sigma_1\times\cdots\times \Sigma_m,g=-d\tau^2+\sum_{i=1}^m\alpha_i^2(\tau) h_{i})$ be a multiwarped spacetime. Assume that it is timelike geodesically and cosmological. If there exists some function $\Omega^2:{\mathbb R}\rightarrow {\mathbb R}^+$ such that $d^2/dt^2(\alpha_i^2/\Omega^2)$ is bounded for all $i=1,\ldots,m$, then the spacetime satisfies the NOH condition. In fact, by making the change of coordinates $t=t(\tau)$ determined by the condition $\dot{t}(\tau)=\Omega^{-1}(\tau)$, we deduce  
	\[	
	\begin{array}{c}
		({\mathbb R}\times \Sigma_1\times\cdots\times \Sigma_m,-d\tau^2+\sum_{i=1}^m\alpha_i^2(\tau) h_{i})\cong\qquad\qquad\qquad\qquad\qquad\qquad \\ \qquad\qquad\qquad \cong (M\subset {\mathbb R}\times S_1\times\cdots\times S_m,\Omega^2(t)(-dt^2+\sum_{i=1}^m\alpha_i^2(t)/\Omega^2(t) h_{i})),
	\end{array} 
	\]
	which clearly falls under the hypotheses of Theorem \ref{io}. Therefore, Bartnik conjecture holds for this sub-class of multiwarped spacetimes. }  
\end{exe}

\section{Temporal functions and omniscience}\label{sec2.3}
In this section we wish to apply and illustrate the concepts of section \ref{sec2} to the important case of a {\it gradient} timelike vector field. However, we have seen those results require completeness of the underlying vector field; thus we must supplement our discussion with a digression on certain conditions which will ensure such completeness. 

\subsection{A digression: forms of completeness}\label{subsec1.1}
The Hopf-Rinow theorem for a Riemannian manifold $(N,h)$ states that the following conditions are equivalent.
\begin{itemize}
    \item[R1)] $(N,d_h)$ is a complete metric space, i.e., all Cauchy sequences converge. (Here, $d_h$ denotes the distance function associated with $h$.)
    \item[R2)] $(N,d_h)$ is {\it finitely compact}, that is, any $d_h$-bounded set has compact closure\footnote{Of course, this is also known as the Heine-Borel property, but we stick with the Busemann-Beem terminology here.}. Equivalently, any closed $d_h$-ball is compact. 
    \item[R3)] $(N,h)$ is geodesically complete.
\end{itemize}
It is well-known that there is no analogue of the Hopf-Rinow theorem for semi-Riemannian manifolds of indefinite signature, and in particular for Lorentzian manifolds. However, inspired directly by previous work by Busemann \cite{busemann} in the more abstract setting of the so-called {\em timelike spaces}, Beem introduced three separate conditions on spacetimes which, by analogy with the statements $(R1)-(R3)$ in the Hopf-Rinow theorem, might be construed as alternatives to the completeness of {\it causal geodesics} \cite{beem}, which in turn is important in physical applications. 

Let us briefly recall these concepts here (conf. also \cite{beem} and \cite[p. 211]{beembook}). 

\begin{defi}[Timelike Cauchy completeness]\label{tmlkcompdef}
A sequence $(x_k)_{k\in \mathbb{N}}$ on $M$ is {\em future [resp. past] timelike Cauchy} if 
\begin{itemize}
    \item[i)] $x_k\ll x_{k+1}$ [resp. $x_{k+1}\ll x_k$], $\forall k \in \mathbb{N}$;
    \item[ii)] there exists a sequence $(B_k)_{k\in \mathbb{N}}$ of nonnegative real numbers such that 
    $$d(x_k,x_{k'})\leq B_k \quad\mbox{ [resp. $d(x_{k'},x_k) \leq B_k$]}, \quad \forall k\leq k'$$
    and with $B_k \rightarrow 0$. We call $(B_k)$ a {\em bounding sequence}.
\end{itemize}
We say that $(M,g)$ is {\em future [resp. past] timelike Cauchy complete} if any future [resp. past] timelike Cauchy sequence converges. If $(M,g)$ is both future and past timelike Cauchy complete, then we simply say it is {\em timelike Cauchy complete}. 
\end{defi}

\begin{defi}[Finite compactness]\label{fincompdef}
$(M,g)$ is said to be {\em future [resp. past] finitely compact} if for any $p,q \in M$ and $B \in \mathbb{R}$, 
$$p\ll q \Rightarrow K=\{x\in J^{+}(q) \; : \;  d(p,x)\leq B\} \mbox{ is compact}$$
$$\mbox{[resp. }q\ll p \Rightarrow K=\{x\in J^{-}(q) \; : \; d(x,p)\leq B\} \mbox{ is compact]}.$$
$(M,g)$ is {\em finitely compact} if it is both future and past finitely compact.
\end{defi}
\begin{defi}[condition $A$]\label{condA}
$(M,g)$ is said to satisfy the {\em future [resp. past] condition $A$} if for any $p, q\in M$ with $p\ll q$ [resp. $q\ll p$] and any future-[resp. past-]inextendible causal geodesic $\gamma:[0,b) \rightarrow M$ ($0<b\leq +\infty$) with $\gamma(0)=q$ we have $d(p,\gamma(t)) \rightarrow +\infty$ [resp. $d(\gamma(t),p) \rightarrow +\infty$] as $t\rightarrow b$. 
$(M,g)$ satisfies the {\em condition $A$} if it satisfies both future and past conditions $A$. 
\end{defi}

Even if $(M,g)$ is globally hyperbolic these conditions do {\em not} imply timelike geodesic completeness (conf., e.g., the example due to Geroch illustrated in \cite[Fig. 6.2]{beembook}). However, we have

\begin{prop}\label{condAcomplete}
Suppose $(M,g)$ satisfies the future condition $A$. Then, any future-directed timelike geodesic ray is future-complete. An analogous statement holds for {\em past-directed} timelike geodesic rays if $(M,g)$ satisfies the past condition $A$. In particular, if $(M,g)$ satisfies condition $A$, then any timelike geodesic line is complete. 
\end{prop}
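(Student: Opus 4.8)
\textbf{Proof proposal for Proposition \ref{condAcomplete}.}

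The plan is to argue by contraposition: a future-directed timelike geodesic ray $\gamma:[0,b)\to M$ that is future-\emph{incomplete} (i.e. $b<+\infty$ with $\gamma$ affinely parametrized) must violate the future condition $A$. So I would start by fixing such an incomplete future-directed timelike geodesic ray $\gamma$, pick a point $q=\gamma(0)$ and, using that $\gamma$ is timelike and future-directed, choose any $p\ll q$ (such $p$ exists since the chronological past of $q$ is nonempty and open). The goal is then to show that $d(p,\gamma(t))$ does \emph{not} tend to $+\infty$ as $t\to b$, contradicting condition $A$. In fact I expect to show the stronger statement that $d(p,\gamma(t))$ stays bounded: the Lorentzian distance from $p$ along $\gamma$ cannot blow up in finite affine time for an incomplete geodesic.

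The key estimate is the standard one bounding Lorentzian distance by the length of the connecting geodesic segment together with a fixed "offset" from $p$ to $q$. Concretely, for $t\in[0,b)$ any causal curve from $p$ to $\gamma(t)$ can be taken as the concatenation of a fixed future-directed timelike curve from $p$ to $q=\gamma(0)$ (of some finite Lorentzian length $L_0$) with the geodesic segment $\gamma|_{[0,t]}$. Since Lorentzian distance is the supremum of lengths of causal curves, and since the reverse triangle inequality gives $d(p,\gamma(t))\ge d(p,q)+d(q,\gamma(t))$, what I actually need is an \emph{upper} bound on $d(q,\gamma(t))$, i.e. on the length of any causal curve from $q$ to $\gamma(t)$. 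Here is where incompleteness enters: because $b<\infty$, the affine parameter interval $[0,t]\subset[0,b)$ has bounded length, and the geodesic segment $\gamma|_{[0,t]}$ has Lorentzian length $\int_0^t\sqrt{-g(\dot\gamma,\dot\gamma)}\,ds = c\,t$ where $c=\sqrt{-g(\dot\gamma,\dot\gamma)}$ is the constant speed; this is at most $c\,b<\infty$. More care is needed to bound $d(q,\gamma(t))$ by this rather than just the length of $\gamma$ itself, since other causal curves from $q$ to $\gamma(t)$ might a priori be longer; the cleanest route is to invoke that within a convex normal neighborhood the geodesic is locally maximizing, or alternatively to use a limit-curve/causal-ladder argument to conclude that $d(q,\gamma(t))$ remains finite and in fact $\limsup_{t\to b} d(q,\gamma(t))<\infty$. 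Either way, combined with the offset $d(p,q)\le d(p,\gamma(t))$... wait, I need the opposite inequality; so instead I bound $d(p,\gamma(t))\le d(p,q')+d(q',\gamma(t))$ through any intermediate point, or more simply note that every causal curve from $p$ to $\gamma(t)$ decomposes (after possibly perturbing) through $q$, yielding $d(p,\gamma(t))\le \ell_0 + d(q,\gamma(t))$ for a fixed $\ell_0$, which stays bounded. This contradicts $d(p,\gamma(t))\to+\infty$, proving the ray is future-complete.

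The past-directed statement is proved by the time-dual argument (reverse the time orientation, or equivalently apply the future case to $(M,-g\text{-time-reversed})$). For the final assertion: a timelike geodesic line is by definition inextendible and maximizing, hence it splits at any interior point into a past-inextendible timelike ray and a future-inextendible timelike ray; if $(M,g)$ satisfies condition $A$ (both future and past), the two halves are respectively past- and future-complete, so the line is complete. The main obstacle I anticipate is the technical point above: carefully justifying that $d(q,\gamma(t))$ (the distance along the incomplete direction) stays bounded, rather than merely that the length of $\gamma$ itself is finite — this requires controlling \emph{all} causal curves between the endpoints, for which one uses local maximality of geodesics in convex neighborhoods together with the finiteness of the affine parameter interval, or a limit curve argument. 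Everything else is routine manipulation of the reverse triangle inequality for Lorentzian distance and the definition of condition $A$.
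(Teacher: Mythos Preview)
Your argument has a genuine gap, and the fix is precisely the one idea the paper's proof hinges on but you do not use: the \emph{ray} hypothesis. By definition, a future-directed timelike geodesic ray $\gamma:[0,b)\to M$ satisfies $d(\gamma(0),\gamma(t)) = L_g(\gamma|_{[0,t]}) = |\dot\gamma(0)|\,t$ for every $t\in[0,b)$. So there is no ``technical point'' in bounding $d(q,\gamma(t))$ with $q=\gamma(0)$: it equals the arclength identically, not merely up to local maximality or limit-curve arguments. You never invoke this, and instead try to control $d(q,\gamma(t))$ via convex neighborhoods, which is unnecessary and would not even suffice globally.

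The more serious problem is the step from $d(q,\gamma(t))$ to $d(p,\gamma(t))$ for an arbitrary $p\ll q$. The inequality you write, $d(p,\gamma(t))\le \ell_0 + d(q,\gamma(t))$, is the \emph{ordinary} triangle inequality, which fails for Lorentzian distance; only the reverse inequality $d(p,\gamma(t))\ge d(p,q)+d(q,\gamma(t))$ holds. Your justification (``every causal curve from $p$ to $\gamma(t)$ decomposes through $q$'') is simply false: causal curves from $p$ to $\gamma(t)$ need not pass through $q$, and the supremum defining $d(p,\gamma(t))$ is taken over all of them. With $p$ chosen off the ray you have no upper control on $d(p,\gamma(t))$ at all. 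The paper circumvents this by choosing \emph{both} $p$ and $q$ on the ray itself: set $p=\gamma(0)$, $q=\gamma(a)$ for some $0<a<b$, so $p\ll q$, and then the ray property gives $d(p,\gamma(t)) = |\dot\gamma(0)|\,t$ exactly. Condition~$A$ forces this to diverge as $t\to b$, whence $b=\infty$. Your treatment of the past case and the line case is fine.
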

\noindent {\em Proof.} We prove the future case only, as the past follows by time-duality. Assume then that $(M,g)$ satisfies the future condition $A$, and let $\gamma:[0,b)\rightarrow M$ be a future-directed timelike geodesic ray. By definition, this means that $\gamma$ is future-inextendible and $L_g(\gamma |_{[0,t)}) = d(\gamma(0),\gamma(t))$ for each $t\in (0,b)$. We wish to prove that $b=+\infty$. Take $p:=\gamma(0)$ and $q:=\gamma(a)$ for some $0<a<b$. Now, $\tilde{\gamma}: s\in [0,b-a) \mapsto \gamma(s+a) \in M$ is still a timelike geodesic ray issuing from $q$, and $p\ll q$. Thus 
$$d(p,\tilde{\gamma}(s)) = L_g(\gamma|_{[0,s+a)})= |\dot{\gamma}(0)|(s+a).$$
Since $d(p,\tilde{\gamma}(s))\rightarrow +\infty$ as $s\rightarrow b-a$ due to the future condition $A$, and the right-hand side of the previous equation diverges, we conclude that $b=\infty$ as desired. The last statement follows by applying the future and past cases to the two ``halves'' of a timelike geodesic line. 
\qcd

Finite compactness, timelike Cauchy completeness and the condition $A$ are not independent if $(M,g)$ has sufficiently good causality. 
\begin{prop}\label{equivprop}
Suppose $(M,g)$ is strongly causal. Consider the following statements.
\begin{itemize}
    \item[a)] $(M,g)$ is future finitely compact.
    \item[b)] $(M,g)$ is future Cauchy complete.
    \item[c)] $(M,g)$ satisfies the future condition $A$.
\end{itemize}
Then, $$(a)\Rightarrow (b) \Rightarrow (c).$$
If $(M,g)$ is globally hyperbolic, then $(c) \Rightarrow (a)$, that is, all the  statements $(a)-(c)$ are equivalent. Analogous statements hold for the past versions of all these statements.
\end{prop}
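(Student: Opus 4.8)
The plan is to establish the three implications $(a)\Rightarrow(b)$, $(b)\Rightarrow(c)$, and $(c)\Rightarrow(a)$ (the last under global hyperbolicity) only in their \emph{future} versions, the past ones following at once by applying the result to the time-reversed spacetime. Throughout I would use freely the reverse (``wrong-way'') triangle inequality for the Lorentzian distance $d$, the push-up property ($p\ll q\leq r\Rightarrow p\ll r$), and the fact that in a strongly causal spacetime every point has arbitrarily small \emph{causally convex}, relatively compact, convex normal neighborhoods; in particular such a spacetime is non-totally-imprisoning, and $d(\cdot,r)\to 0$ along any causal curve ending at $r$ (a limit-curve argument together with the absence of closed causal curves).

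For $(a)\Rightarrow(b)$, given a future timelike Cauchy sequence $(x_k)$ with bounding sequence $(B_k)\to 0$, set $p:=x_1$ and $q:=x_2$; then $p\ll q$, $x_k\in J^+(q)$ for $k\geq 2$, and $d(p,x_k)\leq B_1$ for all $k$, so the tail lies in the compact set $K=\{x\in J^+(q):d(p,x)\leq B_1\}$ furnished by future finite compactness. Extract $x_{k_j}\to x_\infty\in K$. To promote this to convergence of the whole sequence I would take any causally convex neighborhood $V$ of $x_\infty$, note $x_{k_j}\in V$ for $j$ large, and observe that for $k_j\le k\le k_{j+1}$ the chain $x_{k_j}\ll x_k\ll x_{k_{j+1}}$ gives a causal curve through $x_k$ with both endpoints in $V$, whence $x_k\in V$; thus $x_k\to x_\infty$.

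For $(b)\Rightarrow(c)$ I would argue by contradiction. Let $\gamma:[0,b)\to M$ be future-inextendible causal geodesic with $\gamma(0)=q$, $p\ll q$, and suppose $D(t):=d(p,\gamma(t))$ stays bounded; since $p\ll\gamma(t)$ for every $t$, $D$ is positive and nondecreasing, hence $D(t)\uparrow L<\infty$. The core construction is a future timelike Cauchy sequence $(a_k)$ that ``shadows'' $\gamma$. Start with $a_0$ satisfying $p\ll a_0\ll q$, fix $\tau_k\uparrow b$, and, assuming inductively that $a_k\ll\gamma(t)$ for all $t\geq\tau_k$, pick $a_{k+1}$ on a future timelike curve from $a_k$ to $\gamma(\tau_{k+1})$, so near its endpoint that $a_k\ll a_{k+1}$, $a_{k+1}\ll\gamma(\tau_{k+1})$, and both $d(a_{k+1},\gamma(\tau_{k+1}))$ and the distance from $a_{k+1}$ to $\gamma(\tau_{k+1})$ in an auxiliary Riemannian metric are $<2^{-(k+1)}$. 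The reverse triangle inequality along $a_0\ll a_k\ll a_{k'}\ll\gamma(\tau_{k'})$, together with the boundedness of $d(a_0,\gamma(\cdot))$ (it is dominated by $D$), forces $d(a_k,a_{k'})\to 0$, so $(a_k)$ is future timelike Cauchy and, by $(b)$, converges: $a_k\to a_\infty$. Since $a_k$ and $\gamma(\tau_k)$ also become arbitrarily close in the auxiliary metric, $\gamma(\tau_k)\to a_\infty$ in $M$; but a future-inextendible geodesic cannot have $\gamma(\tau_k)$ converging in $M$ with $\tau_k\uparrow b$ — for $b<\infty$ this would make $\gamma$ extendible, and for $b=\infty$ it would eventually imprison $\gamma$ in a relatively compact causally convex neighborhood of $a_\infty$, against strong causality.

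For $(c)\Rightarrow(a)$ I would assume $K=\{x\in J^+(q):d(p,x)\leq B\}$ is non-compact for some $p\ll q$. By global hyperbolicity $J^+(q)$ is closed and $d(p,\cdot)$ is continuous, so $K$ is closed and hence contains a sequence $x_i$ with no convergent subsequence. Join $q$ to $x_i$ by maximal causal geodesics $\gamma_i$, normalize their initial velocities into a fixed compact subset of the causal cone at $q$, and let $\gamma$ be a subsequential limit geodesic; the limit curve lemma makes $\gamma$ future-inextendible and the convergence uniform on compact parameter intervals. For each fixed $t$ one has $\gamma_i(t)\to\gamma(t)$ and $d(p,\gamma_i(t))\leq d(p,x_i)\leq B$, so continuity of $d(p,\cdot)$ yields $d(p,\gamma(t))\leq B$ for all $t$, contradicting the future condition $A$. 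I expect $(b)\Rightarrow(c)$ to be the main obstacle: a bare causal geodesic need not yield a chronological chain (forcing the detour to points strictly in its chronological future), the bounding sequence must be manufactured from the monotone–bounded behaviour of $d(p,\gamma(\cdot))$ rather than read off $\gamma$ itself, and converting the convergence granted by $(b)$ into a contradiction hinges on arranging that the shadowing points approach $\gamma$ in the manifold topology, not merely in Lorentzian distance.
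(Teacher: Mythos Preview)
Your proof is correct and follows essentially the same strategy as the paper's. The arguments for $(a)\Rightarrow(b)$ and $(b)\Rightarrow(c)$ match the paper almost step for step: the same compact set $K$ with $p=x_1$, $q=x_2$; the same causal-convexity trick to upgrade subsequential convergence; the same ``shadowing'' sequence built in the chronological past of $\gamma(\tau_k)$ and controlled simultaneously in Lorentzian distance and in an auxiliary Riemannian metric; and the same bounding-sequence manufacture via monotonicity and boundedness of $d(\text{base point},\cdot)$ along the chain. Two minor remarks: (i) your case split $b<\infty$ versus $b=\infty$ in the final contradiction is unnecessary---in a strongly causal spacetime a future-inextendible causal curve can never be partially future-imprisoned in a compact set, and the convergence of $\gamma(\tau_k)$ already gives such imprisonment regardless of $b$; (ii) for $(c)\Rightarrow(a)$ the paper simply invokes Beem's original result, whereas you supply the limit-geodesic argument explicitly, which is a welcome addition and is correct as sketched (the future-inextendibility of the limit $\gamma$ follows because bounded affine parameters $s_i$ would, by continuous dependence on initial data, force $x_i=\gamma_i(s_i)$ into a compact set).
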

\noindent {\it Proof.} Again we only need to show the future case.\\
$(a) \Rightarrow (b)$\\
Let $x=(x_k)_{k\in \mathbb{N}}$ be future timelike Cauchy sequence in $M$ with an associated bounding sequence $(B_k)_{k \in \mathbb{N}}$. Observe that $x_1\ll x_2$, so
$$x_k \in K= \{z \in J^{+}(x_2) \, : \, d(x_1,z) \leq B_1\}, \quad \forall k\geq 2.$$
Since $K$ is compact by assumption, we can assume that some subsequence $(x_{k_i})_{i\in \mathbb{N}}$ of $x$ converges to some point $z_0 \in K$. Let $U\ni z_0$ be any open set. Using strong causality we can pick a causally convex neighborhood $V\subset U$ of $z_0$. Let $i_0 \in \mathbb{N}$ such that $x_{k_i} \in V$ whenever $i\geq i_0$. Pick any $k> k_{i_0}$. Since for large enough $i>i_0$ we have $k_i>k$, it follows that 
$$x_{k_{i_0}} \ll x_k \ll x_{k_i},$$
and hence $x_k \in V$ by causal convexity. We conclude that $(x_k)$ itself converges to $z_0$. \\
$(b)\Rightarrow (c)$\\
We shall slightly modify the proof of \cite[Lemma 4]{beem}, which assumes global hyperbolicity. Let $p\ll q$ and a future-inextendible causal geodesic $\gamma:[0,b)\rightarrow M$ starting at $q$. Suppose the condition 
$$\lim_{t\rightarrow b}d(p,\gamma(t)) =+\infty$$
{\em fails}. In that case, there exists some constant $B$ and a sequence $(t_k)_{k\in \mathbb{N}}$ in $[0,b)$ converging to $b$ for which 
\begin{equation}\label{eq1}
   d(p,\gamma(t_k)) \leq B, \quad \forall k \in \mathbb{N}.
\end{equation}
We shall presently construct a future timelike Cauchy sequence which does not converge, so that $(M,g)$ is not future timelike Cauchy complete. Fix an auxiliary Riemannian metric $h$ on $M$ with distance function $\rho_h$. Since 
$\gamma(t_k) \in I^{+}(p)$ and $\gamma(t_k)\leq \gamma(t_{k'})$ for $k<k'$, it is not difficult to construct inductively a sequence $(z_k)_{k\in \mathbb{N}}$ in $I^{+}(p)$ for which
\begin{eqnarray}
z_k &\ll& \gamma(t_k) \label{try1} \\
z_{k}&\ll & z_{k+1} \label{try2} \\
\rho_h(z_k,\gamma(t_k))&<&1/k, \quad \forall k \in \mathbb{N}. \label{try3}
\end{eqnarray}
Thus, using the reverse triangle inequality, first together with (\ref{try1}) and (\ref{eq1}), we get
$$d(p_z,k) \leq d(p,z_k) + d(z_k,\gamma(t_k)) \leq d(p, \gamma(t_k))\leq B,$$
and together with (\ref{try2}), 
\begin{equation}\label{eq2}d(p,z_k) \leq d(p,z_k) + d(z_k,z_{k+l}) \leq d(p,z_{k+l}),\qquad\forall l,k\in\mathbb{N}.\end{equation}
We conclude that $d(p,z_k)$ forms a bounded increasing sequence of real numbers, and finally, (\ref{eq2}) can be rewritten as 
$$d(z_{k},z_{k+l})\leq d(p,z_{k+l})-d(p,z_k),$$
which establishes that the sequence $(z_k)_{k\in \mathbb{N}}$ is future timelike Cauchy. We claim it does not converge. Indeed, if it did, then (\ref{try3}) would imply that so would $(\gamma(t_k))_{k\in \mathbb{N}}$. However, this is not possible because it would mean that the future-inextendible causal curve $\gamma$ would be partially imprisoned in a compact set, which in turn is prohibited by strong causality. 

Finally. if $(M,g)$ is globally hyperbolic, the proof that $(c)\Rightarrow (a)$ is given in \cite[Theorem 5]{beem}, so we do not repeat it here.
\qcd

We end this digression by stating Corollary 6 of \cite{beem}, which gives a nice sufficient condition for the completeness conditions introduced here. 
\begin{teo} \label{complete}
If $(M,g)$ is globally hyperbolic and causally geodesically complete, then $(M,g)$ is finitely compact, and hence it is Cauchy complete and satisfies condition $A$. 
\end{teo}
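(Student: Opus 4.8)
The plan is to reduce the statement to \emph{condition $A$} and then invoke the equivalences already recorded. Since every globally hyperbolic spacetime is strongly causal, Proposition~\ref{equivprop} gives $(a)\Leftrightarrow(b)\Leftrightarrow(c)$, both in the future and (dually) in the past; so it is enough to show that global hyperbolicity together with causal geodesic completeness forces the future and past condition $A$, as then finite compactness follows from $(c)\Rightarrow(a)$ and Cauchy completeness from $(a)\Rightarrow(b)$. (Alternatively one may simply cite this as \cite[Corollary~6]{beem}; we sketch the strategy.) So fix $p\ll q$ and a future-inextendible causal geodesic $\gamma:[0,b)\to M$ with $\gamma(0)=q$; by causal geodesic completeness $b=+\infty$, and we must prove $d(p,\gamma(\sigma))\to\infty$ as $\sigma\to\infty$ (the past case being time-dual).

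If $\gamma$ is \emph{timelike}, this is elementary: choose $q'$ with $p\ll q'\ll q$, let $\lambda$ be a maximizing timelike geodesic from $q'$ to $q$, and note that for every $\sigma$ the concatenation of $\lambda$ with $\gamma|_{[0,\sigma]}$ is a future-directed causal curve from $q'$ to $\gamma(\sigma)$ that is timelike on a nondegenerate subinterval; hence $q'\ll\gamma(\sigma)$, and the reverse triangle inequality yields
\[
 d(p,\gamma(\sigma))\ \ge\ d(p,q')+d(q',q)+L_g\bigl(\gamma|_{[0,\sigma]}\bigr)\ =\ d(p,q')+d(q',q)+\sigma\,\sqrt{-g(\dot\gamma(0),\dot\gamma(0))},
\]
which diverges since $\gamma$, being complete and affinely parametrized, is defined for all $\sigma\ge0$.

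If $\gamma$ is \emph{null}, suppose for contradiction that $d(p,\gamma(\sigma_k))\le B$ for some $B>0$ and $\sigma_k\uparrow\infty$. As $p\ll q=\gamma(0)\le\gamma(\sigma_k)$ we have $p\ll\gamma(\sigma_k)$, so Avez--Seifert provides a maximizing future-directed timelike geodesic $\mu_k$ from $p$ to $\gamma(\sigma_k)$ with $L_g(\mu_k)=d(p,\gamma(\sigma_k))\le B$; and $\gamma(\sigma_k)$ leaves every compact set, $\gamma$ being future-inextendible in a strongly causal spacetime. Fixing a \emph{complete} auxiliary Riemannian metric, parametrizing the $\mu_k$ by its arclength and applying the Limit Curve Lemma \cite[Cap.~8]{beembook}, a subsequence of the $\mu_k$ converges locally uniformly to a future-inextendible causal curve $\mu$ from $p$ (inextendible because $\gamma(\sigma_k)\to\infty$ while $p$ is fixed). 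Being a locally uniform limit of maximizing causal geodesics in a globally hyperbolic spacetime, $\mu$ is a maximizing causal geodesic ray from $p$, future-complete by causal geodesic completeness; passing to the limit in $L_g(\mu_k|_{[0,\sigma]})\le B$ gives $d(p,\mu(\sigma))=L_g(\mu|_{[0,\sigma]})\le B$ for all $\sigma$, which forbids $\mu$ from being timelike (a complete timelike ray from $p$ would satisfy $d(p,\mu(\tau))=\tau\to\infty$). Hence $\mu$ is a complete, \emph{achronal} null geodesic ray from $p$.

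The crux — and the only place where global hyperbolicity and causal geodesic completeness are both genuinely needed — is to extract a contradiction from the coexistence of this achronal null ray $\mu$ with the hypotheses $p\ll q$ and $\gamma(\sigma_k)\in J^{+}(q)$. This is delicate because a globally hyperbolic spacetime \emph{can} contain achronal null rays issuing from a point (e.g. in Minkowski space): it is not the existence of $\mu$ that is absurd, but the limiting geometry obtained above, in which such a ray would be forced to stay at bounded Lorentzian distance from $p$ while its far end is dragged into $J^{+}(q)$ with $p$ strictly to the past of $q$. A natural way to push this through is to re-run the limit-curve construction with maximizing timelike geodesics issuing from a point $q'$ with $p\ll q'\ll q$, obtain an achronal null ray $\mu'$ from $q'$, and then use the ``causal buffer'' between $q'$ and $q$ to drive the tail of $\mu'$ off $\partial J^{+}(q')$; making this precise (together with the verification that the limit curve is genuinely inextendible and maximizing) is the technical heart of the matter, and is carried out in \cite{beem}. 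Everything else — the timelike case and the equivalences of Proposition~\ref{equivprop} — is routine.
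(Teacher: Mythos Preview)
The paper does not prove this theorem at all: it is stated as ``Corollary~6 of \cite{beem}'' and followed immediately by a Q.E.D.\ box. So your proposal already goes well beyond what the paper offers. Your reduction to condition~$A$ via Proposition~\ref{equivprop} is exactly the right logical structure, and your timelike case is correct (though the detour through $q'$ and $\lambda$ is unnecessary: the reverse triangle inequality $d(p,\gamma(\sigma))\ge d(p,q)+d(q,\gamma(\sigma))\ge d(p,q)+\sigma\,|\dot\gamma(0)|$ already suffices once $p\ll q\le\gamma(\sigma)$).

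For the null case, you correctly identify that the limit-curve construction yields an achronal null ray from $p$, and you are honest that extracting the contradiction from this is ``the technical heart of the matter'' which you defer to \cite{beem}. That is fair --- but note that this means your argument, like the paper's, ultimately rests on citing Beem for the substantive step. The sketch you give of how one might proceed (re-running the construction from an intermediate $q'$) is plausible in spirit but not a proof; in particular, the passage ``passing to the limit in $L_g(\mu_k|_{[0,\sigma]})\le B$ gives $d(p,\mu(\sigma))=L_g(\mu|_{[0,\sigma]})\le B$'' is delicate, since the $\mu_k$ are parametrized by auxiliary Riemannian arclength rather than by Lorentzian arclength, so the parameter $\sigma$ does not have the same meaning before and after the limit. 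None of this is wrong, just incomplete --- which you acknowledge. In short: your proposal is a reasonable expansion of a result the paper simply quotes, with the genuinely hard null case still outsourced to the same reference.
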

\qcd

\subsection{The timelike eikonal equation}

Returning to our main discussion in this section, recall that a smooth real-valued function $f\in C^{\infty}(M)$ is said to obey (or be a solution of) the {\em timelike eikonal inequality} if $\langle \nabla f,\nabla f \rangle <0$. Such a function is also called a {\em temporal function} in the literature. It is a basic fact in causality theory that a necessary and sufficient condition for the existence of a smooth global solution to the eikonal inequality is that $(M,g)$ is {\em stably causal} (see, e.g., \cite[Proposition 6.4.9]{HE} and \cite[Theorem 1.2]{bernalsanchez}). 

A temporal function $f$ is said to satisfy the {\em timelike eikonal equation} if $$\langle \nabla f,\nabla f\rangle =-1.$$ 
Now, the existence of a (global smooth) solution to the eikonal equation is, at least in principle, a much stronger requirement on a spacetime than just stable causality. In gravitational physics applications, it describes a so-called {\em proper-time synchronizable observer field} \cite[p. 358]{oneill}. It also has a rich geometric content; we list a few of its general properties which we will use here (see the Appendix B of \cite{beembook} and references therein for more details and proofs). Let $f:M\rightarrow \mathbb{R}$ be such a solution to the eikonal equation. 
\begin{itemize}
    \item[E1)] Since $f$ has everywhere timelike gradient, any (non-empty) level set $f^{-1}(r)$ is a smooth spacelike (embedded) hypersurface whose second fundamental form is given (up to a sign choice) by the Hessian $H_f$ of $f$ computed on tangent vectors $v\in Tf^{-1}(r)$. Level hypersurfaces are also acausal in $(M,g)$, because given any (piecewise smooth) causal curve $\alpha:[a,b] \rightarrow M$ we have 
\begin{equation}\label{level1}
(f\circ \alpha)(b)-(f\circ \alpha)(a) = \int_a^b(f\circ \alpha)'(t)\, dt = \int_a^b \langle \nabla f(\alpha(t)),\dot{\alpha}(t)\rangle \, dt \neq 0. 
\end{equation}
In particular, if $\alpha$ is an integral curve of $\nabla f$, i.e., $\dot{\alpha}= \nabla f\circ \alpha$, then (\ref{level1}) becomes
\begin{equation}\label{level1.5}
   (f\circ \alpha)(b)-(f\circ \alpha)(a) = a-b; 
\end{equation}
Therefore, {\it if $\nabla f$ is complete, then $f(\gamma(\mathbb{R}))=\mathbb{R}$ for any inextendible integral curve $\gamma$}, a fact we shall need later on. 
\item[E2)] From (\ref{level1}) we easily deduce that the Lorentzian length $L_g(\alpha)$ of a causal curve segment $\alpha$ satisfies the inequality
\begin{equation}\label{level3}
  L_g(\alpha) \leq |f(\alpha(b)) - f(\alpha(a))|,
\end{equation}
with equality if and only if $\alpha$ is a reparametrization of a segment of integral curve of $\nabla f$. In particular, {\em any segment of integral curve of $\nabla f$ with endpoints on level sets $f^{-1}(s)$ and $f^{-1}(t)$ with $s\neq t$ has maximum Lorentzian length ($\equiv |s-t|$, cf. (\ref{level1.5})) among all causal curve segments connecting $f^{-1}(s)$ and $f^{-1}(t)$.} 
\item[E3)] $\nabla _{\nabla f}\nabla f= 0$. Thus, any integral curve of $\nabla f$ is a unit timelike geodesic. Moreover the previous item means these timelike geodesics are maximal, that is, they maximize the Lorentzian distance function between any two of its points.
\item[E4)] If we consider the decomposition of a vector field $X\in \mathfrak{X}(M)$ into parts parallel and orthogonal to $\nabla f$:
\begin{equation}\label{ortdecomp}
  X= - \langle X, \nabla f\rangle \cdot \nabla f + X^{\perp}.   
\end{equation}
Using this decomposition we may define the following $(0,2)$-tensor, which we refer to as the {\em spatial part} of the metric $g$: 
\begin{equation}\label{normmetric}
h(X,Y) := g(X^{\perp},Y^{\perp}).
\end{equation}
Thus, one easily checks that (\ref{ortdecomp}) implies that the spacetime metric $g$ decomposes as
\begin{equation}\label{metricdecomp}
  g = -df \otimes df + h.
\end{equation}
Observe that the spatial part of the metric is positive-definite on $\nabla f^{\perp}$, and it immediately follows from (\ref{metricdecomp}) that it coincides at every point with the (Riemannian) induced metric on the level hypersuface  through that point.
\end{itemize}

\medskip 

The properties just listed of solutions to the timelike eikonal equation are fairly ostensible. We shall also need some slightly more specialized ones, which we list in the following

\begin{prop}\label{propeiko}
Let $f \in C^{\infty}(M)$ be a solution to the timelike eikonal equation such that $\nabla f$ is complete, and let $\Sigma := f^{-1}(0)$. The following statements hold.
\begin{itemize}
    \item[i)] Let $\phi: \mathbb{R}\times M \rightarrow M$ denote the flow of $X=\nabla f$, and let $\phi_{\Sigma}:= \phi|_{\mathbb{R}\times \Sigma}$ (recall \eqref{eq:1}). Then $\phi_{\Sigma}$ is a diffeomorphism between $\mathbb{R}\times \Sigma$ and $M$, and in fact 
    \begin{equation}\label{normexp}
        \phi_{\Sigma}(t,x) = \exp^{\Sigma} _{\perp}(t\cdot \nabla f(x)), \quad \forall t \in \mathbb{R},\;\; \forall x\in \Sigma,
    \end{equation}
    where $\exp^{\Sigma} _{\perp}$ denotes the normal exponential map of $\Sigma $. In particular, $\Sigma$ is a slice for $\phi$, so the action is proper.
    \item[ii)] Let $x \in \Sigma$ and $u\in T_x\Sigma$. Then $J:\mathbb{R}\rightarrow TM$ given by
    \begin{equation}\label{jacobi}
    J_u(t) := d(\phi_\Sigma)_{(t,x)}(0,u), \quad \forall t\in \mathbb{R}
    \end{equation}
    is a Jacobi field along the integral curve of $\nabla f$ through $x$. Indeed, it is the Jacobi field associated with a geodesic variation 
    $$\sigma(t,s) = \exp^{\Sigma} _{\perp}(t\cdot \nabla f(\beta(s)),$$
    where $\beta: (-\delta,\delta)\rightarrow \Sigma$ is any smooth curve with $\dot{\beta}(0)=u$. Furthermore, $J_u$ satisfies the equation 
    \begin{equation}
        \dot{J}_u(t) = -L_t(J_u(t)),
    \end{equation}
    $J_u(0)=u$ and $\dot{J}_u(0) = -L_0(u)$, where $L_t(v):= -\nabla _v\nabla f$ is the Weingarten shape operator of the level hypersurface $f^{-1}(-t)$. (In particular, $J_u$ is a $\Sigma$-Jacobi field). 
\end{itemize}
\end{prop}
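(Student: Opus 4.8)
The plan is to handle the two items in turn; both reduce to careful bookkeeping around two standard facts already available here, namely that every integral curve of $\nabla f$ is a geodesic (property E3) and that variation fields of geodesic variations are Jacobi fields.

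For (i), I would begin by noting that $\nabla f|_\Sigma$ is a unit timelike field normal to $\Sigma$: orthogonality to $T_x\Sigma$ holds because $\Sigma=f^{-1}(0)$ forces $\langle\nabla f(x),v\rangle=v(f)=0$ for every $v\in T_x\Sigma$, and unitness is the eikonal equation. Since the integral curve $t\mapsto\phi(t,x)$ is a geodesic through $x$ with velocity $\nabla f(x)$ at $t=0$, uniqueness of geodesics identifies it with $t\mapsto\exp^\Sigma_\perp(t\cdot\nabla f(x))$, which is \eqref{normexp}. To show $\phi_\Sigma$ is a diffeomorphism I would exhibit its inverse explicitly: set $\Psi\colon M\to\mathbb{R}\times\Sigma$, $\Psi(p):=\bigl(-f(p),\phi(f(p),p)\bigr)$. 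This is well defined with values in $\mathbb{R}\times\Sigma$ because differentiating $f\circ\phi(\cdot,p)$ and using the eikonal equation gives $f(\phi(t,p))=f(p)-t$ (compare \eqref{level1.5}), whence $f(\phi(f(p),p))=0$; it is smooth because $\Sigma$ is embedded; and the identities $f(\phi(t,x))=-t$ for $x\in\Sigma$ yield at once $\Psi\circ\phi_\Sigma=\mathrm{id}_{\mathbb{R}\times\Sigma}$ and $\phi_\Sigma\circ\Psi=\mathrm{id}_M$. Hence $\phi_\Sigma$ is a diffeomorphism, so each orbit of $\phi$ meets $\Sigma$ exactly once; that is, $\Sigma$ is a slice, and properness of the action follows from the characterization of slices recalled earlier (\cite[Proposition 2.11]{JLyo}).

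For (ii), fix $x\in\Sigma$ and $u\in T_x\Sigma$, and choose a smooth curve $\beta\colon(-\delta,\delta)\to\Sigma$ with $\beta(0)=x$ and $\dot\beta(0)=u$. I would consider the map $\sigma(t,s):=\phi_\Sigma(t,\beta(s))=\exp^\Sigma_\perp\bigl(t\cdot\nabla f(\beta(s))\bigr)$: for each fixed $s$ it is an integral curve of $\nabla f$, hence a geodesic, so $\sigma$ is a geodesic variation of $t\mapsto\phi_\Sigma(t,x)$ and its variation field $t\mapsto\partial_s\sigma(t,0)=d(\phi_\Sigma)_{(t,x)}(0,u)=J_u(t)$ is a Jacobi field along that geodesic. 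For the first-order equation I would use $\partial_t\sigma(t,s)=\nabla f(\sigma(t,s))$ together with the symmetry lemma $\tfrac{D}{dt}\partial_s\sigma=\tfrac{D}{ds}\partial_t\sigma$, which gives $\dot J_u(t)=\tfrac{D}{ds}\bigl(\nabla f(\sigma(t,s))\bigr)\big|_{s=0}=\nabla_{J_u(t)}\nabla f=-L_t(J_u(t))$. Evaluating at $t=0$, and using that $\phi_\Sigma(0,\cdot)$ is the inclusion $\Sigma\hookrightarrow M$ so that $d(\phi_\Sigma)_{(0,x)}$ restricts to the identity on $T_x\Sigma$, produces $J_u(0)=u$ and $\dot J_u(0)=-L_0(u)$. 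Since $L_0$ is the shape operator of $\Sigma=f^{-1}(0)$ itself, these are exactly the initial conditions characterizing a $\Sigma$-Jacobi field, so $J_u$ is one.

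I do not expect a genuine obstacle: the statement is an assembly of standard material. The points needing care are purely formal — verifying that $\Psi$ is smooth \emph{as a map into the embedded hypersurface} $\Sigma$ rather than merely into $M$; keeping the sign convention $L_t(v)=-\nabla_v\nabla f$ consistent throughout so that the stated signs in $\dot J_u(t)=-L_t(J_u(t))$ and $\dot J_u(0)=-L_0(u)$ are correct; and managing the identification $T_x\Sigma\subset T_xM$ under which $d(\phi_\Sigma)_{(0,x)}$ acts as the identity on $T_x\Sigma$.
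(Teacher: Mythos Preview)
Your proposal is correct and complete. The paper itself gives only brief ``Comments on the proof'': for (i) it notes that integral curves of $\nabla f$ are complete timelike geodesic \emph{lines} (maximizing, by E2--E3), so no focal points of $\Sigma$ can occur along them and the normal exponential map is nonsingular; for (ii) it simply refers to Appendix~B of \cite{beembook}. Your argument fills in the details the paper omits, and for (i) takes a slightly different tack: rather than invoking maximality to rule out focal points (which, strictly, only gives that $\phi_\Sigma$ is an immersion and leaves global bijectivity to be checked separately), you write down the explicit two-sided inverse $\Psi(p)=(-f(p),\phi(f(p),p))$ using the integrated eikonal relation $f(\phi(t,p))=f(p)-t$. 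This is cleaner for the global statement, since it yields bijectivity and smoothness of the inverse in one stroke; the paper's focal-point remark, on the other hand, makes the geometric content (absence of conjugate/focal points along maximizing geodesics) more visible. Your treatment of (ii) via the symmetry lemma and $\partial_t\sigma=\nabla f\circ\sigma$ is exactly the computation underlying the reference the paper cites.
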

\noindent {\it Comments on the proof.} $(i)$ follows immediately from the fact that integral curves of $\nabla f$ are complete timelike geodesic lines (conf. properties $(E1)-(E3)$ above), which means, in particular, that the normal exponential map has no singularities (which would correspond to focal points of $\Sigma$ along the integral curves). $(ii)$ is proved in detail in the section 2 of the Appendix B of \cite{beembook} (see especially Eq. (B.12) and Lemma B.6 of that reference).
\qcd

\subsection{Main result}

Now, we proceed to find the tight condition that guarantes the onmiscience property in our class of spacetimes. To this aim, first let us consider ($2$-dimensional) de Sitter's spacetime $$(\mathbb{R}\times\mathbb{S}^1,-dt^2+\mathfrak{a}_k(t)^2d\Omega_0),\qquad \mathfrak{a}_k(t):=\cosh(kt).$$ 
This spacetime becomes relevant because the corresponding gradient vector field $\nabla f=\partial_t$ lies just in the limit of not verifying the omniscience property. In fact, in this case, the function $\mathfrak{a}_k$ asymptotically grows to infinity too fast, verifying
\[
\int_0^{\infty}\frac{dt}{\mathfrak{a}_k(t)}<\infty;\qquad\hbox{moreover,}\quad -\frac{\ddot{\mathfrak{a}}_k}{\mathfrak{a}_k}=k^2(\equiv\hbox{curvature}).
\]  
This suggests that, in order to ensure the omniscient character of $\nabla f$, it suffices to bound the curvature by some quotient of the form $-\ddot{\mathfrak{a}}/\mathfrak{a}$, for some positive function $\mathfrak{a}$ with moderate asymptotic growth, in the sense that the corresponding integral $\int_0^{\infty}\mathfrak{a}^{-1}$ is infinite. This simple observation allows us to establish the following key technical result.

%
%
\begin{lemma}\label{tamee}
	Let $(M=\mathbb{R}\times \Sigma,g=-dt^2+h_t)$ be a spacetime such that the integral curves of $\partial_t$ are geodesics.
	Suppose the existence of some positive function $\mathfrak{a}:[0,\infty)\rightarrow \mathbb{R}$, with $\dot{\mathfrak{a}}(t_0)>0$ for some $t_0\in [0,\infty)$, 
	such that
	\begin{equation}\label{hypp}
		\sqrt{n}R_t\geq -\frac{\ddot{\mathfrak{a}}(t)}{\mathfrak{a}(t)}\quad\forall t\in [t_0,\infty)\qquad\hbox{and}\qquad \int_{0}^{\infty}\frac{dt}{\mathfrak{a}(t)}=\infty,
	\end{equation}
	where $R_t:={\rm max}\{\langle R(\partial_t,\hat{v})\partial_t,\hat{w}\rangle :\; \hat{v},\hat{w}\in \partial_t^{\perp}\;\hbox{are $g$-unit}\}$, and $R$ is the curvature tensor. Then, there exists $\epsilon>0$ small enough such that 
	\begin{equation}\label{tameq}
		\epsilon \sqrt{h_t(J_u(t),J_u(t))} \leq \mathfrak{a}(t) \cdot \sqrt{h_0(u,u)}, \quad \forall u\in T\Sigma,\; \forall t\in [0,\infty),
	\end{equation}
	where $J_u$ is given as in Eq. (\ref{jacobi}).
\end{lemma}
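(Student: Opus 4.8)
The plan is to reduce \eqref{tameq} to a one–dimensional Sturm–type comparison for the norm of the Jacobi field $J_u$, and to treat separately the compact interval $[0,t_0]$, where the curvature hypothesis is not assumed. First I would record that $J_u$ is everywhere orthogonal to $\partial_t$: since $u\in T\Sigma$ and $\dot J_u(0)=-L_0(u)$ is tangent to $\Sigma$, and since the integral curves of $\partial_t$ are geodesics so that $g(R(J_u,\partial_t)\partial_t,\partial_t)=0$, the function $t\mapsto g(J_u(t),\partial_t)$ solves a homogeneous linear second–order ODE with vanishing initial data, hence vanishes identically. Therefore $h_t(J_u(t),J_u(t))=g(J_u(t),J_u(t))$, and it suffices to control the smooth nonnegative function $\zeta(t):=|J_u(t)|_g$; by linearity of the Jacobi equation we may normalize $h_0(u,u)=1$, so $\zeta(0)=1$, and at the isolated instants where $\zeta(t)=0$ the asserted inequality is automatic.

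Next I would derive the differential inequality obeyed by $\zeta$. Differentiating $\zeta^2$ twice along the geodesic and using the Jacobi equation $\ddot J_u+R(J_u,\partial_t)\partial_t=0$ gives $(\zeta^2)''=2\,|\dot J_u|_g^2-2\,g(R(J_u,\partial_t)\partial_t,J_u)$. Working in a parallel $g$–orthonormal frame of $\partial_t^{\perp}$ along the geodesic, the curvature term is a quadratic form in the components of $J_u$ whose matrix entries are exactly the pairings $g(R(\partial_t,\hat v)\partial_t,\hat w)$, all bounded by $R_t$; estimating this matrix block and paying a factor depending on the number $n$ of spatial directions when passing from that operator–type bound to its action on the whole spatial block produces a second–order differential inequality for $\zeta$ whose ``model'' equation is $w''+\sqrt{n}\,R_t\,w=0$ — precisely the equation whose (sub/super)solutions are governed by $\mathfrak a$ through \eqref{hypp}. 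Equivalently, and perhaps more transparently, one works with the shape operator $L_t$ of the slices $\Sigma_t$, which obeys a Riccati equation $\dot L_t=L_t^2+\mathcal{R}_t$ with $\mathcal{R}_t$ the tidal operator, uses $\tfrac{d}{dt}\log\zeta^2=-2\,g(L_tJ_u,J_u)/\zeta^2$, and bounds the relevant eigenvalue of $L_t$ by comparing the Riccati equation with a scalar one; the quadratic term in the Riccati equation is what prevents that eigenvalue from escaping to $-\infty$, and is where \eqref{hypp} enters.

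With the differential inequality in hand, I would run the comparison on $[t_0,\infty)$: using the sign of $\dot{\mathfrak a}(t_0)$ together with \eqref{hypp}, the Wronskian $\zeta'\mathfrak a-\zeta\mathfrak a'$ (equivalently the quotient $\zeta/\mathfrak a$) is monotone in the favourable direction, so that $\zeta(t)\le C\,\mathfrak a(t)$ for all $t\ge t_0$, where $C$ depends only on $\zeta(t_0),\zeta'(t_0)$ (hence ultimately on $L_{t_0}$ and on the data of $\mathfrak a$ at $t_0$) and, crucially, not on $u$ after the normalization. On the compact interval $[0,t_0]$ the Jacobi endomorphism $u\mapsto d(\phi_\Sigma)_{(t,\cdot)}(0,u)$ depends continuously on $t$, so $\zeta$ is bounded there by a constant $C'$ independent of the normalized $u$, while $\mathfrak a$ is continuous and strictly positive, hence bounded below by some $m>0$ on $[0,t_0]$, giving $\zeta\le (C'/m)\,\mathfrak a$ there. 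Taking $\epsilon$ small enough — any $\epsilon\le\min\{1/C,\ m/C'\}$ — then yields $\epsilon\,\zeta(t)\le\mathfrak a(t)=\mathfrak a(t)\sqrt{h_0(u,u)}$ for all $t\ge 0$ and all $u$, which is \eqref{tameq}.

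The step I expect to be the main obstacle is obtaining an \emph{upper} bound on the Jacobi field out of a \emph{one–sided} curvature bound. The naive scalar estimate for $|J_u|_g$ carries an unfavourable Cauchy--Schwarz defect $|\dot J_u|_g^2-(\,|J_u|_g'\,)^2\ge 0$ (and, in the Riccati formulation, an analogous $\|L_t\|^2$ contribution), so one must argue that this term does not spoil the comparison — e.g.\ by showing that the eigenvalue of $L_t$ responsible for the growth of $\zeta$ cannot escape to $-\infty$, thanks to the quadratic Riccati term combined with \eqref{hypp}, and that a comparison function $\mathfrak a$ of moderate growth ($\int_0^\infty\mathfrak a^{-1}=\infty$) then dominates $\zeta$. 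Pinning down the exact constant $\sqrt{n}$, and verifying that the comparison on $[t_0,\infty)$ is genuinely global rather than valid only up to a first zero of the comparison solution — which is why both $\dot{\mathfrak a}(t_0)>0$ and the precise sign in \eqref{hypp} are needed — are the remaining delicate points.
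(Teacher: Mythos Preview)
Your scalar reduction to $\zeta=|J_u|_g$ runs into precisely the obstacle you name, and you do not resolve it: the Cauchy--Schwarz defect $|\dot J_u|_g^2-(\zeta')^2\ge 0$ enters $\zeta''$ with the wrong sign for an \emph{upper} bound, so from the one-sided curvature hypothesis you only obtain $\zeta''\ge -(\text{curv})\,\zeta$, which is the inequality feeding a \emph{lower} Sturm estimate. The Riccati alternative you sketch might salvage matters, but you do not carry it through, and eigenvalue comparison for the matrix Riccati equation in dimension $n>1$ suffers the same kind of defect (the operator norm of $L_t^2$ is not controlled by the square of a single eigenvalue).

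The paper sidesteps this by working componentwise rather than with the norm. In a parallel $g$-orthonormal frame $\{E_i\}$ of $\partial_t^\perp$ along the geodesic one writes $J_u=\sum_i b_iE_i$, so that the Jacobi equation gives the \emph{equality} $\ddot b_i=-R_{u,i}(t)\,|J_u|$ with $R_{u,i}:=\langle R(\partial_t,J_u/|J_u|)\partial_t,E_i\rangle\le R_t$, while the hypothesis yields $\ddot{\mathfrak a}\ge -\sqrt{n}\,R_{u,i}\,\mathfrak a$. One then argues by contradiction at the \emph{first crossing time} $t_*$ at which some component $\epsilon b_{i_0}$ reaches $\mathfrak a$: on $[0,t_*)$ every $\epsilon b_i<\mathfrak a$, hence $\epsilon|J_u|<\sqrt{n}\,\mathfrak a$, and comparing $\epsilon\ddot b_{i_0}$ with $\ddot{\mathfrak a}$ at an intermediate point via the two displayed relations produces the contradiction. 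The componentwise equation carries no Cauchy--Schwarz slack, and the factor $\sqrt{n}$ in \eqref{hypp} is exactly what converts the componentwise bound $\epsilon b_i<\mathfrak a$ into control of $|J_u|$.

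A secondary gap: your treatment of $[0,t_0]$ invokes compactness in $t$ to obtain a bound $C'$ independent of the normalized $u$, but $C'$ still depends on the base point $x\in\Sigma$ through the shape operator $L_0$ and through the curvature along $\gamma_x|_{[0,t_0]}$, and the lemma does not assume $\Sigma$ compact; so as written your $\epsilon$ need not be uniform over all of $T\Sigma$.
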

\begin{proof}
Let $\{E_i(t)\}_{i=1}^n$ be an orthonormal frame of parallel vector fields (all of them orthogonal to $\partial_t$) along a generic integral curve of $\partial_t$. Then
\[
J_u(t)=\sum_{i=1}^n b_i(t)E_i(t),\quad \ddot{J}_u(t)=\sum_{i=1}^n \ddot{b}_i(t)E_i(t),\quad |J_u(t)|=\sqrt{\sum_{i=1}^n b_i(t)^2}.
\]
Since $J_u$ is a Jacobi vector field along the integral curve of $\partial_t$, it satisfies:
\begin{equation}\label{oio}
	\ddot{J}_u+R(\partial_t,J_u)\partial_t=0,\qquad J_u(0)=u,\qquad\dot{J}_u(0)=-L(u),
\end{equation}
being $L$ the Weingarten shape operator of $\{0\}\times\Sigma$ in $(M,g)$. Therefore, if we scalarly multiply the equation in (\ref{oio}) by $E_i(t)$, we obtain
\[
\langle \ddot{J}_u(t),E_i(t)\rangle =
-\langle R(\partial_t,J_u(t)|J_u(t)|^{-1})\partial_t,E_i(t)\rangle |J_u(t)|;
\]
denoting $R_{u,i}(t):=\langle R(\partial_t,J_u(t)|J_u(t)|^{-1}),\partial_t,E_i(t)\rangle$, it can be rewritten as
\[
\ddot{b}_i(t)=-R_{u,i}(t)\sqrt{b_1(t)^2+\cdots+b_n(t)^2}\qquad \forall i=1,\ldots,n,\qquad\forall t\in \mathbb{R}.
\]
But, taking into account the hypothesis (\ref{hyp}), we also have
\[
\ddot{a}(t)\geq -\sqrt{n}R_{u,i}(t)a(t)\qquad\forall i=1,\ldots,n\qquad\forall t\in [0,\infty).
\]
So, a standard comparison argument allows us to conclude the existence of $\epsilon>0$ small enough such that
\[
\epsilon
\sqrt{ h(J_u(t),J_u(t))}=\epsilon\sqrt{b_1(t)^2+\cdots+b_n(t)^2}<\mathfrak{a}(t)=\mathfrak{a}(t)\sqrt{h_0(u,u)}\quad\forall t\in [0,\infty).
\]
In fact, assume by contradiction that the inequality in previous expression does not hold. Let $t_*\in (0,\infty)$ be the infimum value such that $\epsilon b_{i}(t)<\mathfrak{a}(t)$ for all $i$ and for all $t\in [0,t_*)$ and $\epsilon b_{i_0}(t_*)=\mathfrak{a}(t_*)$ for some $i_0$. Then, $\epsilon \dot{b}_{i_0}(t_*)\geq \dot{\mathfrak{a}}(t_*)$, and thus, $\epsilon \ddot{b}_{i_0}(t_{**})>\ddot{\mathfrak{a}}(t_{**})$ for some $0<t_{**}<t_*$, in contradiction with
	\[
	\epsilon\ddot{b}_{i_0}(t_{**})=-\epsilon k_{i_0}(t_{**})\sqrt{\sum_{i=1}^{n}b_{i}(t_{**})^2}<-\sqrt{n}k_{i_0}(t_{**})\mathfrak{a}(t_{**})\leq\ddot{\mathfrak{a}}(t_{**}).
	\]
\end{proof}

\begin{lemma}\label{tame}
Let $f$ be a solution of the timelike eikonal equation, and assume $\nabla f$ is complete. Suppose, in addition, that the curvature constraint (\ref{hypp}) holds (with $\nabla f\equiv \partial_t$).
Then $\nabla f$ is future tame. 
\end{lemma}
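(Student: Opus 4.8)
The plan is to observe that, for a solution $f$ of the timelike eikonal equation with complete gradient, the construction of Section~\ref{sec2} produces an \emph{orthogonal} splitting whose lapse is identically $1$; future tameness then collapses to a pointwise comparison between the spatial metrics $h_\tau$ and $h_0$, and that comparison is exactly the content of Lemma~\ref{tamee}.

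First I would fix the slice and read off the splitting. Since $\nabla f$ is complete, Proposition~\ref{propeiko}(i) applies with $\Sigma:=f^{-1}(0)$: the flow $\phi$ of $X=\nabla f$ is a proper $\mathbb{R}$-action, $\Sigma$ is a slice, and $\phi_\Sigma\colon\mathbb{R}\times\Sigma\to M$ is a diffeomorphism. By properties (E3) and (E4), $\nabla f$ is unit timelike and $g$-orthogonal to the level sets of $f$; hence in \eqref{1formmetricinduced}--\eqref{eq:3} one has $\beta\equiv 1$ and $\omega_\tau\equiv 0$, i.e. $\phi_\Sigma^{*}g=-d\tau^{2}+h_\tau$ --- the orthogonal splitting of Section~\ref{sec2.2} with $\nabla f\equiv\partial_t$, as in the hypothesis. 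Choosing $\omega_0:=0$ and $g_0:=h_0$ in Definition~\ref{tamedefi}, the Randers data \eqref{randerseq},~\eqref{eq:2} degenerate to $F^{+}_{\tau}(v)=\sqrt{h_\tau(v,v)}$ and $F^{+}(v)=\sqrt{h_0(v,v)}$; since $\beta\equiv1$, the future-tameness inequality \eqref{bound1} becomes simply $\sqrt{h_\tau(v,v)}\le\mathfrak{a}(\tau)\sqrt{h_0(v,v)}$ for all $v\in T\Sigma$ and all $\tau\in\mathbb{R}$, while \eqref{bound2} is the divergence $\int_0^{\infty}\mathfrak{a}(s)^{-1}\,ds=\infty$.

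Next I would feed in Lemma~\ref{tamee}. Its hypotheses are met: the integral curves of $\partial_t$ are geodesics by (E3), and the curvature constraint \eqref{hypp} is assumed; thus there is $\epsilon>0$ with $\epsilon\sqrt{h_t(J_u(t),J_u(t))}\le\mathfrak{a}(t)\sqrt{h_0(u,u)}$ for all $u\in T\Sigma$ and $t\ge 0$, where $J_u$ is the Jacobi field of \eqref{jacobi}. The point to verify is that this is literally the inequality of the previous paragraph with $v=u$: by Proposition~\ref{propeiko}(ii) one has $J_u(t)=d(\phi_\Sigma)_{(t,x)}(0,u)$ and $J_u(0)=u$, and since $\phi_\Sigma$ is an isometry for $-d\tau^{2}+h_\tau$, the $g$-norm of $J_u(t)$ equals the $(-d\tau^{2}+h_\tau)$-norm of $(0,u)$ at $(t,x)$, namely $\sqrt{h_t(u,u)}$; hence $h_t(J_u(t),J_u(t))=h_t(u,u)$. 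Setting $\hat{\mathfrak{a}}:=\mathfrak{a}/\epsilon$, we obtain \eqref{bound1} for all $\tau\ge0$, while $\int_0^{\infty}\hat{\mathfrak{a}}(s)^{-1}\,ds=\epsilon\int_0^{\infty}\mathfrak{a}(s)^{-1}\,ds=\infty$, so \eqref{bound2} is preserved.

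It remains to secure \eqref{bound1} also for $\tau<0$, to match Definition~\ref{tamedefi} verbatim. This I would handle by enlarging $\hat{\mathfrak{a}}$ on $(-\infty,0]$ to dominate $\tau\mapsto\sup_{x,v}\sqrt{h_\tau(v,v)/h_0(v,v)}$ there --- either by running the comparison of Lemma~\ref{tamee} backwards on each finite subinterval, or, in the cosmological setting of interest where $\Sigma$ is compact, by continuity of $\tau\mapsto h_\tau$ together with compactness of $\Sigma$; the modification lives entirely on $(-\infty,0]$, so the divergence of $\int_0^{\infty}\hat{\mathfrak{a}}^{-1}$ is untouched and $\nabla f$ is future tame. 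I expect the only delicate step to be the translation in the third paragraph --- matching the Finsler-theoretic definition of tameness with the Riemannian data of the eikonal splitting, in particular recognizing $h_\tau(u,u)$ as the squared $g$-length of the $\Sigma$-Jacobi field $J_u(\tau)$ of Proposition~\ref{propeiko}(ii); once that identification is in place Lemma~\ref{tamee} supplies all of the analysis, and the $\tau<0$ extension is routine.
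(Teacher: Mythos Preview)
Your proof is correct and follows essentially the same route as the paper's: reduce to the orthogonal splitting $\phi_\Sigma^*g=-d\tau^2+h_\tau$ via Proposition~\ref{propeiko} and (E3)--(E4), observe that $F^{+}_\tau=\sqrt{h_\tau}$, and invoke Lemma~\ref{tamee} together with the identification $h_\tau(J_u(\tau),J_u(\tau))=h_\tau(u,u)$ coming from~(\ref{jacobi}). You are in fact more careful than the paper on one point: Definition~\ref{tamedefi} literally asks for (\ref{bound1}) for all $\tau\in\mathbb{R}$, whereas Lemma~\ref{tamee} delivers it only for $\tau\ge 0$; the paper passes over this silently, while you note that $\hat{\mathfrak a}$ can be freely modified on $(-\infty,0]$ without affecting (\ref{bound2}), which is the right observation (and in the application via Theorem~\ref{main1} one may also simply replace any $(\tau_0,x_0)$ with $\tau_0<0$ by $(0,x_0)\gg(\tau_0,x_0)$, so only $\tau\ge 0$ is really needed).
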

\begin{proof}
First, observe that the completeness of $\nabla f$ means that $f(M) =\mathbb{R}$ and $\Sigma :=f^{-1}(0)$ is a slice for the flow $\phi$ of $\nabla f$ by Proposition \ref{propeiko}. Referring to the notation in section \ref{sec2}, observe that (\ref{metricdecomp}) yields 
\begin{equation}\label{metricdecompsplit}
\phi_{\Sigma}^{*}g = -d\tau^2 + h_{\tau},
\end{equation}
where $h_{\tau} = \phi_{\Sigma}^{*}h$ is a 1-parameter family of metrics on $\Sigma$. Using this and (\ref{metricdecompsplit}) we find out the associated family of pre-Finsler metrics reduces to 
\begin{equation}\label{notfinsler}F^{\pm}_{\tau} \equiv \sqrt{h_{\tau}}.\end{equation}
But then, applying Lemma \ref{tamee}, and using Eq. (\ref{jacobi}) in (\ref{tameq}), we conclude that it becomes
$$\sqrt{h_{\tau}(v,v)} \leq \mathfrak{a}(\tau)\epsilon^{-1} \sqrt{h_{0}(v,v)}, \quad \forall v\in T\Sigma,$$
which in view of (\ref{notfinsler}) precisely gives the condition of future tameness. (Compare Definition \ref{tamedefi}.)
\end{proof}
As mentioned above, global solutions to the eikonal equation are hard to come by; however, in the 1990s Garc\'{i}a-R\'{i}o and Kupeli \cite{garciario1,garciario2} made a simple and yet fruitful observation: if $f\in C^{\infty}(M)$ is a {\em temporal} function, i.e., if it a priori satisfies only the eikonal {\em inequality} - which, recall, exists in any stably causal spacetime - then, {\em with respect to the conformally related metric}
\begin{equation}\label{grkmetric}
  \hat{g} := -g(\nabla f,\nabla f)\cdot g,  
\end{equation}
$f$ is a solution of the eikonal {\em equation}, i.e., $|\hat{\nabla} f|_{\hat{g}}=1$ on $(M,\hat{g})$. They used this fact to obtain a number of splitting statements for stably causal spacetimes. This simple fact will be an important ingredient in our main result in this section, but we shall need a technical lemma first. 
\begin{lemma}\label{lemma1}
Suppose $(M,g)$ is future timelike Cauchy complete [resp. future finitely compact], and let $\Omega \in C^{\infty}(M)$ be a strictly positive function such that $\Omega \geq \varepsilon$ for some number $\varepsilon >0$. Then, $(M,\hat{g})$ is also future timelike Cauchy complete [resp. future finitely compact], where
$$\hat{g}:= \Omega ^2g.$$
(Analogous results hold for the past cases.) 
\end{lemma}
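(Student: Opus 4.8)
The plan is to reduce both assertions to a single elementary comparison between the Lorentzian distances of $g$ and of $\hat g=\Omega^2 g$, exploiting the uniform lower bound $\Omega\ge\varepsilon$. First I would record that the relations $\ll$ and $\le$, and hence the sets $I^{\pm}$ and $J^{\pm}$, are conformally invariant; consequently $(M,g)$ and $(M,\hat g)$ share the same causal futures $J^{+}(q)$ and the same chronologically ordered pairs $p\ll q$, so that all the combinatorial data entering Definitions \ref{tmlkcompdef} and \ref{fincompdef} carry over unchanged between the two metrics and only the numerical distances differ. Then, for any future-directed causal $v\in TM$ one has $\sqrt{-\hat g(v,v)}=\Omega\,\sqrt{-g(v,v)}\ge\varepsilon\sqrt{-g(v,v)}$; integrating along an arbitrary future-directed causal curve $\alpha$ gives $\hat L(\alpha)\ge\varepsilon\, L(\alpha)$ for the respective Lorentzian arclengths, and taking the supremum over the (common) family of future-directed causal curves joining two points yields
\begin{equation}\label{distcomp}
\hat d(p,q)\ge \varepsilon\, d(p,q),\qquad\forall\, p,q\in M .
\end{equation}

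Granting \eqref{distcomp}, the timelike Cauchy completeness claim is immediate. Let $(x_k)$ be a future timelike Cauchy sequence for $(M,\hat g)$ with bounding sequence $(\hat B_k)$, so $x_k\ll x_{k+1}$ and $\hat d(x_k,x_{k'})\le\hat B_k$ whenever $k\le k'$, with $\hat B_k\to 0$. The ordering $x_k\ll x_{k+1}$ persists for $g$ by conformal invariance, and by \eqref{distcomp} we get $d(x_k,x_{k'})\le\varepsilon^{-1}\hat d(x_k,x_{k'})\le\varepsilon^{-1}\hat B_k$ for $k\le k'$; since $B_k:=\varepsilon^{-1}\hat B_k\ge 0$ still tends to $0$, $(x_k)$ is a future timelike Cauchy sequence for $(M,g)$ with bounding sequence $(B_k)$. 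By hypothesis it converges in $M$, and as the convergence is in the (unchanged) manifold topology, this is exactly what is needed.

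For the finite compactness claim, fix $p,q\in M$ with $p\ll q$ and $\hat B\in\mathbb{R}$, and put $\hat K:=\{x\in J^{+}(q):\hat d(p,x)\le\hat B\}$. If $\hat B<0$ then $\hat K=\emptyset$; otherwise \eqref{distcomp} gives $\hat K\subset K:=\{x\in J^{+}(q):d(p,x)\le\varepsilon^{-1}\hat B\}$, and $K$ is compact by the future finite compactness of $(M,g)$. It then remains to see that $\hat K$ is closed in $M$: if $x_k\in\hat K$ and $x_k\to x$, then $x\in K$ because $K$ is compact (hence closed) and contains the $x_k$; in particular $x\in J^{+}(q)$, and since the Lorentzian distance $\hat d(p,\cdot)$ is lower semicontinuous, $\hat d(p,x)\le\liminf_k\hat d(p,x_k)\le\hat B$, so $x\in\hat K$. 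Thus $\hat K$ is a closed subset of the compact set $K$, hence compact, and $(M,\hat g)$ is future finitely compact. The past versions follow by the time-dual argument, or equivalently by applying the above to the time-reversed spacetime. The one place needing a little care is the closedness of $\hat K$, resolved precisely by the inclusion $\hat K\subset K$ with $K$ already known to be compact; everything else is routine.
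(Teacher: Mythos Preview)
Your proof is correct and follows essentially the same approach as the paper: establish the key inequality $\hat d(p,q)\ge\varepsilon\, d(p,q)$ by comparing arclengths pointwise along causal curves, then transfer Cauchy sequences and the sets $\hat K\subset K$ across the conformal change. Your treatment of the closedness of $\hat K$ via lower semicontinuity of $\hat d(p,\cdot)$ together with the inclusion in the compact set $K$ is in fact slightly cleaner than the paper's wording.
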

\begin{proof} Let $p\leq_{g} p$, and let $\alpha:[a,b]\rightarrow M$ be a future-directed causal curve segment in $(M,g)$ from $p$ to $q$. Thus it is also a future-directed causal curve segment in $(M,\hat{g})$ (so $p\leq _{\hat{g}}q$). Then we have 
$$L_g(\alpha) = \int_{a}^b (\Omega(\alpha(t))^{-1}|\dot{\alpha}(t)|_{\hat{g}}\, dt \leq (1/\varepsilon) L_{\hat{g}}(\alpha),$$
whence we conclude that 
\begin{equation}\label{keyineq}
d(p,q) \leq (1/\varepsilon) \hat{d}(p,q),
\end{equation}
where $\hat{d}$ is the Lorentzian distance function with respect to $\hat{g}$. Thus, we have:
\begin{itemize}
    \item[1)] If $(x_k)_{k\in \mathbb{N}} $ is a future timelike Cauchy sequence in $(M,\hat{g})$ with bounding sequence $(B_k)$, then (\ref{keyineq}) implies that $(x_k)_{k\in \mathbb{N}} $ is also a future timelike Cauchy sequence in $(M,g)$ with bounding sequence $(B_k/\varepsilon)$. This implies that if $(M,g)$ is future timelike Cauchy complete, then so is $(M,\hat{g})$.
    \item[2)] Assume now $(M,g)$ is future finitely compact. Let $p,q \in M$ and $B \in \mathbb{R}$. If $p\ll _{\hat{g}} q$, then also $p \ll _g q$, so 
$$\hat{K}:= \{x\in J_{\hat{g}}^{+}(q) \; : \;  \hat{d}(p,x)\leq B\} \stackrel{(\ref{keyineq})}{\subset} \{x\in J_g^{+}(q) \; : \;  d(p,x)\leq B/\varepsilon\} := K,$$
and $K$ is compact. By the lower semicontinuity of the Lorentzian distance function of $(M,g)$, and the fact that $K$ is closed implies that $\hat{K}$ is closed, and hence compact. Thus, $(M, \hat{g})$ is indeed future finitely compact. 
\end{itemize}

\end{proof}
We are now ready to state and prove the main result of this section.  
\begin{teo}\label{mainthm}
Suppose $(M,g)$ is timelike Cauchy complete and admits a temporal function $f\in C^{\infty}(M)$ such that 
\begin{itemize}
    \item[a)] there exists a number $\varepsilon>0$ for which $| \nabla f|_g \geq \varepsilon$;
    \item[b)] $f$ is a solution of the eikonal equation on $(M,\hat{g}:=-g(\nabla f,\nabla f)g)$ satisfying the curvature constraint (\ref{hypp}) with $\partial_t$ replaced by $\nabla f$.
\end{itemize}
Then $f(M) = \mathbb{R}$ and the timelike foliation spanned by $\nabla f$ is future omniscient in $(M,g)$. If in addition $f^{-1}(0)$ is compact, then the NOH condition holds in $(M,g)$.
\end{teo}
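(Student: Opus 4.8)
The strategy is to run the whole argument on the conformally rescaled spacetime $(M,\hat g)$, $\hat g:=-g(\nabla f,\nabla f)\,g=\Omega^2 g$ with $\Omega:=|\nabla f|_g$, and then pull the conclusions back to $(M,g)$ using the conformal invariance of the chronological relation, of future omniscience, and of the NOH condition. First I would note that hypothesis (a) gives $\Omega\ge\varepsilon>0$, so applying Lemma~\ref{lemma1} in both time directions shows that $(M,\hat g)$ is again timelike Cauchy complete; moreover $(M,\hat g)$ is stably causal (it retains the temporal function $f$), hence strongly causal.

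The technical core is to upgrade this ``Cauchy-type'' completeness into honest geodesic completeness of $\hat\nabla f$ on $(M,\hat g)$. By hypothesis (b) the function $f$ solves the timelike eikonal equation for $\hat g$, so by properties (E2)--(E3) every inextendible integral curve of $\hat\nabla f$ is a maximal unit timelike geodesic, that is, a \emph{timelike geodesic line}: it is inextendible as a curve, being a maximal integral curve of a smooth vector field, and it maximises the Lorentzian distance between any two of its points. Since $(M,\hat g)$ is strongly causal and timelike Cauchy complete, Proposition~\ref{equivprop} gives that it satisfies condition $A$ (future and past), and then the last assertion of Proposition~\ref{condAcomplete} forces every timelike geodesic line --- in particular every integral curve of $\hat\nabla f$ --- to be complete. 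Thus $\hat\nabla f$ is complete, and property (E1) applied on $(M,\hat g)$ yields $f(M)=\mathbb R$, which is the first assertion of the theorem.

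The rest is bookkeeping. Applying Lemma~\ref{tame} to $(M,\hat g)$ is now legitimate --- $f$ solves the eikonal equation there, $\hat\nabla f$ is complete, and the curvature constraint~\eqref{hypp} holds by hypothesis (b) --- so $\hat\nabla f$ is future tame; Theorem~\ref{main1} then says that the foliation by the integral curves of $\hat\nabla f$ is future omniscient in $(M,\hat g)$. But $\hat\nabla f=\Omega^{-2}\nabla f$ is a positive multiple of $\nabla f$, so this is exactly the foliation spanned by $\nabla f$, and since future omniscience ($I^-(\gamma)=M$ for every leaf $\gamma$) depends only on the conformal class, it holds verbatim in $(M,g)$. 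Finally, if $\Sigma:=f^{-1}(0)$ is compact, Proposition~\ref{propeiko}(i) exhibits it as a compact slice for the flow of $\hat\nabla f$, so all the hypotheses of Proposition~\ref{omnitonoh} are in place for $(M,\hat g)$; hence the (future) NOH condition holds in $(M,\hat g)$ and therefore --- again by conformal invariance --- in $(M,g)$, which consequently has no null lines. (If, in addition to~\eqref{hypp}, the time-reversed curvature constraint is imposed, the same argument run in the past direction gives past tameness, past omniscience, and thus the full NOH condition.)

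I expect the only genuine difficulty to be the completeness step of the second paragraph: there is no Hopf--Rinow theorem in Lorentzian signature, so one must route through the machinery of Section~\ref{subsec1.1} --- the implications finitely compact $\Rightarrow$ Cauchy complete $\Rightarrow$ condition $A$, and condition $A$ $\Rightarrow$ completeness of timelike geodesic \emph{lines} --- together with the observation that eikonal flow lines really are lines and not merely rays, which is where the maximality property (E2) enters. Once geodesic completeness of $\hat\nabla f$ is secured, the invocations of Lemmas~\ref{lemma1} and~\ref{tame}, of Theorem~\ref{main1}, of Proposition~\ref{propeiko}, and of Proposition~\ref{omnitonoh}, together with the conformal transfer, are all routine.
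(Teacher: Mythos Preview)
Your proposal is correct and follows essentially the same approach as the paper's proof: pass to the conformal metric $\hat g$ via Lemma~\ref{lemma1}, use strong causality and Proposition~\ref{equivprop} to get condition~$A$, invoke Proposition~\ref{condAcomplete} for completeness of $\hat\nabla f$, then apply Lemma~\ref{tame} and Theorem~\ref{main1}, and finish with Proposition~\ref{omnitonoh}. You are in fact slightly more careful than the paper at the completeness step, explicitly noting (via (E2)--(E3)) that the integral curves of $\hat\nabla f$ are timelike geodesic \emph{lines} so that Proposition~\ref{condAcomplete} genuinely applies.
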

\begin{proof} First, recall that future omniscience is a conformally invariant condition, so it suffices to show it on $(M,\hat{g})$. Moreover, a) together with Lemma \ref{lemma1} imply that $(M,\hat{g})$ is itself timelike Cauchy complete. Therefore, we may simply drop the hat from the metric and assume $f$ to be a solution to the eikonal equation on $(M,g)$ itself, satisfying the curvature constraint (\ref{hypp}), 
and we do so for the rest of the proof. 

Next, observe that the stable causality of $(M,g)$ implies it is also strongly causal, so condition $A$ holds by Proposition \ref{equivprop}, and Proposition \ref{condAcomplete} now implies that $\nabla f$ is complete. Moreover, Lemma \ref{tame} ensures that $\nabla f$ is future tame, and so the conclusion follows using Theorem \ref{main1}. The last statement then follows from Proposition \ref{omnitonoh}.
\end{proof}

\begin{cor}\label{final}
Let $(M,g)$ be a {\em cosmological spacetime}, that is, 
\begin{itemize}
    \item[i)] $(M,g)$ is globally hyperbolic with compact Cauchy hypersurfaces, and
    \item[ii)] $Ric (v,v) \geq 0$, for all $v \in TM$ timelike. 
\end{itemize}
Assume that $(M,g)$ is causally geodesically complete and that it admits a temporal function $f$ which satisfies properties $(a)$ and $(b)$ in Theorem \ref{mainthm}. Then $(M,g)$ is isometric to $(\mathbb{R}\times \Sigma, -dt^2 + h_0)$, where $(\Sigma, h_0)$ is a compact Riemannian manifold.
\end{cor}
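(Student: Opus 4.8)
The plan is to feed the conclusion of Theorem~\ref{mainthm} into the Lorentzian splitting theorem, with the nonexistence of null lines as the bridge. First I would check that $(M,g)$ meets the hypotheses of Theorem~\ref{mainthm}: global hyperbolicity together with causal geodesic completeness gives, by Theorem~\ref{complete}, that $(M,g)$ is finitely compact, hence timelike Cauchy complete; properties $(a)$ and $(b)$ on the temporal function $f$ are assumed outright. So Theorem~\ref{mainthm} applies, giving $f(M)=\mathbb{R}$ and, crucially, that the timelike foliation $\mathcal{F}$ spanned by $\nabla f$ is future omniscient in $(M,g)$.

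Next I would promote this to the statement that $(M,g)$ contains no null lines. Since $(M,g)$ is globally hyperbolic it is chronological, so by \cite[Lemma~1.1]{harrisomni} there is a \emph{complete} future-directed timelike vector field $X$ spanning $\mathcal{F}$. Choosing a smooth compact Cauchy hypersurface $S$ of $(M,g)$, every integral curve of $X$ is an inextendible timelike curve and hence meets $S$ exactly once, so $S$ is a compact slice for the flow of $X$. As $(M,g)$ is also strongly causal, Proposition~\ref{omnitonoh} applies to $X$ and yields the future NOH condition, whence $(M,g)$ has no null lines. Observe that one does \emph{not} need $f^{-1}(0)$ to be compact here (the hypothesis under which Theorem~\ref{mainthm} itself outputs NOH): in a cosmological spacetime the ambient compact Cauchy hypersurface already furnishes a compact slice for any complete timelike vector field.

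Finally I would run the standard line argument. A cosmological spacetime always contains a causal geodesic line (Limit Curve Lemma plus compactness of the Cauchy hypersurfaces); since there are no null lines it must be timelike, and since $(M,g)$ is causally geodesically complete it is in fact a complete timelike line. The Lorentzian splitting theorem, applied to $(M,g)$ --- globally hyperbolic, obeying the TCC, and containing a complete timelike line --- then produces an isometry with $(\mathbb{R}\times\Sigma,-dt^{2}\oplus h_{0})$ for a complete Riemannian manifold $(\Sigma,h_{0})$; since $\{0\}\times\Sigma$ is a Cauchy hypersurface of the product spacetime and the Cauchy hypersurfaces of $(M,g)$ are compact, $\Sigma$ is compact.

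The proof is essentially an assembly of results already established, so there is no single deep step; the points to be careful about are verifying that the inputs to Proposition~\ref{omnitonoh} are genuinely in place (strong causality, a complete spanning field for $\mathcal{F}$, and the slice being hit exactly once by each orbit), and the bookkeeping around conformal invariance --- the nonexistence of null lines, omniscience, and the NOH condition are all conformally invariant, so the internal rescaling $\hat{g}=-g(\nabla f,\nabla f)\,g$ performed inside Theorem~\ref{mainthm} causes no difficulty when transferring conclusions back to $(M,g)$.
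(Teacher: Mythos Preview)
Your proof is correct and follows essentially the same strategy as the paper: verify timelike Cauchy completeness via Theorem~\ref{complete}, invoke Theorem~\ref{mainthm}, pass to the absence of null lines, and then split. There are two small differences worth noting. First, the paper argues that the level sets of $f$ are themselves compact Cauchy hypersurfaces (so that the last clause of Theorem~\ref{mainthm} applies directly to give NOH), whereas you bypass this by taking an arbitrary smooth compact Cauchy hypersurface as the slice in Proposition~\ref{omnitonoh}; your route is slightly more robust since it avoids the extra claim about level sets. Second, for the final splitting the paper simply cites Galloway~\cite{G1}, while you unpack the argument explicitly (existence of a causal line, exclusion of the null case, completeness, Lorentzian splitting); these amount to the same thing.
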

\begin{proof}
Just note that the causal geodesic completeness and global hyperbolicity imply that $(M,g)$ is timelike Cauchy complete by Theorem \ref{complete}, and since the Cauchy hypersurfaces are compact, every level hypersurface of $f$ must be Cauchy hypersurfaces as well, and hence compact. Thus, all the hypotheses in Theorem \ref{mainthm} are in force, and we conclude that $(M,g)$ satisfies the NOH. The splitting result for a cosmological spacetime with the NOH has been proven by Galloway in \cite{G1}. 
\end{proof}

\section{Applications beyond the timelike foliation}\label{sec2.4}

In this section we are going to show that our technique provides results even when the timelike foliation is not explicitly given. 

Let $(M,g)$ be a timelike geodesically complete globally hyperbolic spacetime with a Cauchy hypersurface $\Sigma$. Consider all the inextendible unit future timelike geodesics $\{\gamma_x\}_{x\in \Sigma}$ of $(M,g)$ passing orthogonally through $\Sigma$ at $t=0$. Let $F:\mathbb{R}\times \Sigma\rightarrow M$, $F(t,x):=\exp_{\perp}^{\Sigma}(t\cdot \dot{\gamma}_x(0))=\gamma_x(t)$, be the onto smooth map naturally provided by these geodesics. Let us endow ${\mathbb R}\times \Sigma$ with the pull-back $F^*g$, which can be written as
\begin{equation}\label{kg}
	F^*g=-dt^2 + h_{t},
\end{equation}
where $h_{t}$ is a $1$-parameter family of (possibly degenerate) symmetric semi-definite positive $(2,0)$-tensors on $\Sigma$ (in fact, the eventual presence of focal points for the geodesics passing orthogonally through $\Sigma$ will prevent $F$ from being an immersion, in general). 

The key observation here consists of realizing that, even if $F$ is not a diffeomorphism and $F^*g$ is not a metric, the pair $(\mathbb{R}\times \Sigma,-dt^2 + h_{t})$ can be naturally endowed with a chronology relation that captures the essential information needed from $(M,g)$ to replicate the arguments from previous sections. As we will see below, this will be done by combining via $F$ the chronological structure in $(\mathbb{R}\times \Sigma,-dt^2 + h_{t})$ with the geometric objects of $(M,g)$. 

First, note that
\begin{equation}\label{du}
h_t(v,w)=\langle J_v(t),J_w(t) \rangle\qquad\forall v,w\in T_x\Sigma,\quad\forall t\in{\mathbb R},
\end{equation}
where $J_u$ is the Jacobi vector field along the geodesic $\gamma_x$ in $(M,g)$ with $J_u(0)=u$, $\dot{J}_u(0)=-L(u)$, and being $L$ the Weingarten shape operator of $\Sigma$ in $(M,g)$.

A smooth curve $\alpha:I\rightarrow {\mathbb R}\times \Sigma$, $\alpha(s)=(t(s),x(s))$, is said to be {\em future [past] timelike} in $({\mathbb R}\times \Sigma,-dt^2+h_t)$ if 
\[
\dot{t}(s)>\sqrt{h_{t(s)}(\dot{x}(s),\dot{x}(s))},\qquad [\;\; \dot{t}(s)<-\sqrt{h_{t(s)}(\dot{x}(s),\dot{x}(s))}\;\; ]\qquad\hbox{for all $s\in I$} 
\]
(note that the eventual degeneracy of $h_t$ does not affect this definition). Then, we define the notion of {\em chronology} $\ll$ in $({\mathbb R}\times \Sigma,-dt^2+h_t)$ by using timelike curves as usual. In particular, the concept of omniscient vector field can be also define in $({\mathbb R}\times \Sigma,-dt^2+h_t)$.

Suppose now the existence of a positive function $\mathfrak{a}:[0,\infty)\rightarrow \mathbb{R}$, with $\dot{\mathfrak{a}}(t_0)>0$ for some $t_0\in [0,\infty)$, 
such that
\begin{equation}\label{hyp}
	\sqrt{n}\langle R(\dot{\gamma}(t),v(t))\dot{\gamma}(t),w(t)\rangle\geq -\frac{\ddot{\mathfrak{a}}(t)}{\mathfrak{a}(t)}\;\;\;\forall t\in [t_0,\infty)\;\quad\hbox{and}\;\quad \int_0^{\infty}\frac{dt}{\mathfrak{a}(t)}=\infty,
\end{equation}
where $v(t), w(t)\in \dot{\gamma}(t)^{\perp}\cap \hat{T}M$ for all $t$, and $\gamma$ is any inextendible unit future timelike geodesic $\gamma$ passing orthogonally through $\Sigma$ at $t=0$. Then (recall the proof of Lemma \ref{tamee}), there exists some $\epsilon>0$ small enough such that 
\[
\epsilon \sqrt{h_t(J_u(t),J_u(t))} \leq \mathfrak{a}(t) \cdot \sqrt{h_0(u,u)}, \quad \forall u\in T\Sigma,\; \forall t\in [0,\infty),
\]
which in virtue of (\ref{du}) translates into 
$$\sqrt{h_{t}(u,u)} \leq \mathfrak{a}(t)\epsilon^{-1} \sqrt{h_{0}(u,u)}, \quad \forall u\in T\Sigma,\; \forall t\in [0,\infty).$$	
But this last inequality (compare with the notion of future tame) implies that $\partial_t$ is future omniscient in $({\mathbb R}\times \Sigma,-dt^2+h_t)$ (recall the corresponding argument in the proof of Theorem \ref{main1}), and an immediate adaptation of the argument in the proof of Proposition \ref{omnitonoh} allows us to conclude that the original spacetime $(M,g)$ satisfies the NOH condition. Summarizing, the following result holds:

\begin{thm}\label{final'}
	Let $(M,g)$ be a {\em cosmological spacetime}, that is, 
	\begin{itemize}
		\item[i)] $(M,g)$ is globally hyperbolic with a compact Cauchy hypersurface $\Sigma$, and
		\item[ii)] $Ric (v,v) \geq 0$, for all $v \in TM$ timelike. 
	\end{itemize}
	Assume that $(M,g)$ is timelike geodesically complete and satisfies the curvature constraint (\ref{hyp}).
	Then $(M,g)$ stisfies the NOH condition, and thus, it is isometric to $(\mathbb{R}\times \Sigma, -dt^2 + h_0)$, where $(\Sigma, h_0)$ is a compact Riemannian manifold. \qed
\end{thm}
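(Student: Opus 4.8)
The plan is to reproduce, in the present possibly-degenerate setting, the chain \emph{curvature bound $\Rightarrow$ future tameness $\Rightarrow$ future omniscience $\Rightarrow$ absence of null lines $\Rightarrow$ splitting}, using the surjective map $F:\mathbb{R}\times\Sigma\to M$, $F(t,x)=\exp^{\Sigma}_{\perp}(t\,\dot\gamma_x(0))=\gamma_x(t)$, as a surrogate for the diffeomorphism available in the non-degenerate (globally hyperbolic, orthogonally split) case. First I would record the structural facts already assembled above: $F$ is defined on all of $\mathbb{R}\times\Sigma$ and onto because $(M,g)$ is timelike geodesically complete and $\Sigma$ is a compact Cauchy hypersurface; $F^{*}g=-dt^{2}+h_{t}$ with $h_{t}(v,w)=\langle J_{v}(t),J_{w}(t)\rangle$ for the $\Sigma$-Jacobi fields along the $\gamma_x$; the chronology $\ll$ on $\mathbb{R}\times\Sigma$ generated by the curves with $\dot t>\sqrt{h_{t}(\dot x,\dot x)}$ (resp. $<-\sqrt{h_{t}(\dot x,\dot x)}$) is well defined despite the degeneracy of $h_t$; and $F$ is compatible with this chronology, since $g(dF\,\dot\alpha,dF\,\dot\alpha)=(F^{*}g)(\dot\alpha,\dot\alpha)$ and $dF(\partial_t)=\dot\gamma_x$ is future-directed, so $F$ carries future-directed timelike curves of $(\mathbb{R}\times\Sigma,-dt^2+h_t)$ to future-directed timelike curves of $(M,g)$. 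Note also that the first coordinate $t$ is a time function for $\ll$.

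The analytic heart is the Jacobi-field comparison. Expanding $J_u=\sum_i b_i E_i$ in a parallel $g$-orthonormal frame of $\dot\gamma_x^{\perp}$ and using $\ddot J_u+R(\dot\gamma_x,J_u)\dot\gamma_x=0$, the curvature constraint (\ref{hyp}) together with the ODE comparison already carried out in the proof of Lemma \ref{tamee} yields $\varepsilon>0$ with $\varepsilon\sqrt{h_{t}(u,u)}\le\mathfrak{a}(t)\sqrt{h_{0}(u,u)}$ for every $u\in T\Sigma$ and every $t\ge 0$; combined with $\int_{0}^{\infty}dt/\mathfrak{a}(t)=\infty$ this is precisely the future-tameness estimate for $\partial_t$ in $(\mathbb{R}\times\Sigma,-dt^{2}+h_{t})$. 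Running the construction from the proof of Theorem \ref{main1} verbatim — join $x_0$ to an arbitrary $x\in\Sigma$ by an $h_0$-unit-speed curve, reparametrize it by $s(t)=\int^{t}d\lambda/\mathfrak{a}(\lambda)$, and check via the estimate that the resulting curve $\lambda\mapsto(\lambda,x(s(\lambda)))$ is future-directed timelike — shows that for any $(\tau_0,x_0)$ and any $x\in\Sigma$ one has $(\tau_0,x_0)\ll(\tau_1,x)$ for all large $\tau_1$; i.e. $\partial_t$ is future omniscient in $(\mathbb{R}\times\Sigma,-dt^{2}+h_{t})$.

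It remains to transfer this to $M$, adapting the proof of Proposition \ref{omnitonoh}. Fix $p\in M$ and $(\tau_0,x_0)\in F^{-1}(p)$; by future omniscience and openness of $\ll$ there is for each $x'\in\Sigma$ a $t_{x'}$ and a neighbourhood $U_{x'}\ni x'$ with $[t_{x'},\infty)\times U_{x'}\subset I^{+}(\tau_0,x_0)$. Given a future-inextendible timelike curve $\gamma$ in $M$ — strong causality being automatic since $(M,g)$ is globally hyperbolic — take an exhausting sequence $\gamma(s_k)$ leaving every compact set and choose preimages $(t_k,x_k)\in F^{-1}(\gamma(s_k))$; since $F$ is continuous and sends compacta to compacta, $(t_k,x_k)$ still leaves every compact set of $\mathbb{R}\times\Sigma$, so, $\Sigma$ being compact, a subsequence has $x_k\to x_{*}$, and then, $t$ being a time function and $\gamma$ future-inextendible, $t_k\to+\infty$. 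Hence $(t_k,x_k)\in[t_{x_{*}},\infty)\times U_{x_{*}}\subset I^{+}(\tau_0,x_0)$ eventually, and pushing the realizing timelike curve forward by $F$ gives $p\ll\gamma(s_k)$, i.e. $p\in I^{-}(\gamma)$. Thus $(M,g)$ satisfies the future NOH condition, so by Proposition \ref{omnitonoh} it has no null lines; since $(M,g)$ is cosmological and timelike geodesically complete, the Limit Curve Lemma argument \cite{beembook} furnishes a causal geodesic line which, not being null, is a complete timelike line, and the Lorentzian splitting theorem (alternatively, Galloway's splitting result \cite{G1} for cosmological spacetimes with the NOH) yields the isometry with $(\mathbb{R}\times\Sigma,-dt^{2}+h_{0})$, $\Sigma$ compact because the Cauchy hypersurfaces are. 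The step I expect to be delicate is this last transfer through the \emph{non-injective} $F$: one must ensure the pulled-back sequence genuinely escapes compacta (clear from continuity and compactness) and, more subtly, that one may arrange $t_k\to+\infty$ rather than $t_k\to-\infty$ — which is exactly where the time-function property of the first coordinate, and hence the behaviour of local lifts of $\gamma$ across focal points of $\Sigma$, must be controlled; a convenient alternative is to avoid lifting $\gamma$ at all and instead note that future omniscience of $\partial_t$ already gives $I^{-}(\gamma_x)=M$ for every orthogonal geodesic $\gamma_x$ and argue directly that this, with the achronality and maximality of a hypothetical null line, contradicts strong causality.
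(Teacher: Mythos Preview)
Your approach is exactly the one the paper takes: the discussion preceding the theorem in Section~\ref{sec2.4} is the proof, and it follows the same chain (Jacobi comparison $\Rightarrow$ tameness $\Rightarrow$ future omniscience of $\partial_t$ on $(\mathbb{R}\times\Sigma,-dt^2+h_t)$ $\Rightarrow$ NOH on $(M,g)$ via the adaptation of Proposition~\ref{omnitonoh} $\Rightarrow$ Galloway's splitting). The paper is in fact terser than you about the transfer step, simply asserting that ``an immediate adaptation'' of Proposition~\ref{omnitonoh} works.

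The one point you correctly flag as delicate --- forcing $t_k\to+\infty$ rather than $-\infty$ --- has a clean resolution that does not require any control of lifts across focal points, nor the alternative route you sketch. Since $\Sigma$ is a Cauchy hypersurface and $\gamma$ is future-inextendible, $\gamma(s_k)\in I^{+}(\Sigma)$ for all large $k$. But $F(t,x)=\gamma_x(t)$ with $\gamma_x$ a future-directed unit timelike geodesic through $x\in\Sigma$ at $t=0$, so $F(t,x)\in I^{-}(\Sigma)$ whenever $t<0$; acausality of the Cauchy hypersurface therefore forces every preimage of $\gamma(s_k)$ to have $t_k>0$. Combined with the fact that $(t_k,x_k)$ leaves every compact set while $x_k\to x_{*}$, this gives $t_k\to+\infty$. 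Your invocation of ``$t$ being a time function'' is not quite the right justification (that holds only on $\mathbb{R}\times\Sigma$, and the preimages need not be $\ll$-ordered), but the conclusion is correct for this reason. Note also that your alternative --- showing only $I^{-}(\gamma_x)=M$ for the orthogonal geodesics --- would not by itself yield the full future NOH needed to invoke \cite{G1}; the main argument is the one to keep.
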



\section*{Acknowledgments}

The authors would like to acknowledge the enormous and deep contributions by Penrose to the General Theory of Relativity, which continue being inexhaustible sources of inspiration for the specialists in the field. The first two authors are partially supported by the grant PID2020-118452GBI00
(Spanish MICINN). The last two authors are partially supported by the grant PY20-01391 (PAIDI 2020).


\end{document}